\documentclass[11pt]{article}
\usepackage{amsmath}
\usepackage{fancyhdr}
\usepackage{amssymb}
\usepackage{amsthm}
\usepackage{graphicx}
\usepackage{varioref}
\usepackage{verbatim} 
\usepackage{multicol}
\usepackage{lmodern}
\usepackage{enumerate}
\usepackage[normalem]{ulem}
\usepackage{caption}
\usepackage{subcaption}
\usepackage[T1]{fontenc}
\usepackage[margin=1in]{geometry}
\usepackage{fancyhdr}
\usepackage{authblk}
\usepackage{tikz}
\usetikzlibrary{patterns}
\usetikzlibrary{arrows}

\usepackage{mathrsfs}

\usepackage{url}\urlstyle{same}
\usepackage{xspace}
\usepackage{thm-restate}

\usepackage{hyperref}
\hypersetup{
    bookmarksnumbered=true, 
    unicode=false, 
    pdfstartview={}, 
    pdftitle={}, 
    pdfauthor={}, 
    pdfsubject={}, 
    pdfcreator={}, 
    pdfproducer={}, 
    pdfkeywords={}, 
    pdfnewwindow=true, 
    colorlinks=true, 
    linkcolor=blue, 
    citecolor=blue, 
    filecolor=blue, 
    urlcolor=blue 
}


\newtheorem{theorem}{Theorem}
\newtheorem{definition}{Definition}
\newtheorem{lemma}{Lemma}

\newcommand{\ket}[1]{|#1\rangle}
\newcommand{\bra}[1]{\langle#1|}

\DeclareMathAlphabet{\matheu}{U}{eus}{m}{n}

\newcommand{\sop}[1]{{\mathcal #1}}


\newcommand{\eps}{\varepsilon}

\newcommand{\norm}[1]{\left\| #1 \right\|}

\def\tO{\widetilde{\mathrm{O}}}

\renewcommand{\(}{\left(}
\renewcommand{\)}{\right)}


\begin{document}

\title{\textsc{nand}-Trees, Average Choice Complexity, and Effective Resistance}
\author{Stacey Jeffery\thanks{Institute for Quantum Information and Matter (IQIM), Caltech, \texttt{sjeffery@caltech.edu}} and Shelby Kimmel\thanks{Joint Center for Quantum Information and Computer Science (QuICS),
University of Maryland, \texttt{shelbyk@umd.edu}}}

\date{}

\maketitle

\begin{abstract}

We show that the quantum query complexity of evaluating \textsc{nand}-tree
instances with \emph{average choice complexity} at most $W$ is $O(W)$,
where average choice complexity is a measure of the difficulty of
winning the associated two-player game. This generalizes a
superpolynomial speedup over classical query complexity due to Zhan
\emph{et al}. We further show that the player with a winning strategy
for the two-player game associated with the \textsc{nand}-tree can win
the game with an expected $\widetilde{O}(N^{1/4}\sqrt{{\cal C}(x)})$
quantum queries against a random opponent, where ${\cal C }(x)$ is the
average choice complexity of the instance. This gives an improvement
over the query complexity of the naive strategy, which costs
$\widetilde{O}(\sqrt{N})$ queries.

The results rely on a connection between \textsc{nand}-tree evaluation
and $st$-connectivity problems on certain graphs, and span programs
for $st$-connectivity problems. Our results follow from relating
average choice complexity to the effective resistance of these graphs,
which itself corresponds to the span program witness size.

\end{abstract}

\section{Introduction}

\textsc{nand}-tree evaluation is a Boolean formula evaluation problem that
has proven to be a rich ground for developing
quantum algorithms. The first quantum algorithm for evaluating \textsc{nand}-trees
was created by Farhi, Goldstone, and Gutmann in the continuous time
model \cite{FGG07}, and showed that a tree with input size $N$, corresponding to a formula consisting of nested \textsc{nand}-gates to depth $\log N$,
could be evaluated in time $O(\sqrt{N})$. They used an algorithm
that involved scattering a wavepacket off of a graph.  
The \textsc{nand}-tree problem
was quickly adapted to the query model, \cite{CCJ+09,ACR+10}, and
seems to be a major motivation for the development of span
program quantum algorithms in \cite{RS12}. Through the
refinement of span program algorithms, Reichardt was able to 
show that any formula with $O(1)$ fan-in on $N$ variables
(of which \textsc{nand}-trees are a special case) can be evaluated in 
$O(\sqrt{N})$ queries \cite{R14}. Classically,
the query complexity of evaluating \textsc{nand}-trees is $\Theta(N^{.753})$ \cite{SW86}.
Variants of 
the \textsc{nand}-tree problem have also been studied. For example, when the
input is in a certain class, called \emph{$k$-fault trees}, the
quantum query complexity can be improved to $O(2^k)$ \cite{ZKH12,K11}. 
A lower bound of $\Omega((\log \log N-\log k)^k)$ on the classical query 
complexity of evaluating $k$-fault trees
makes this a \emph{superpolynomial} quantum speedup for
a range of values of~$k$~\cite{ZKH12}.

Every \textsc{nand}-tree instance also defines a two-player game as
follows. Players $A$ and $B$ move on a full binary tree of depth $\log
N$, with leaves labeled by the input bits of $x\in\{0,1\}^{N}$.
Starting from the root, $A$ and $B$ alternate choosing one child
of the current node to move to until a leaf is reached. If the value
of the leaf is $1$, $A$ wins, and if the value of the leaf is $0$, $B$
wins. It turns out that for $1$-instances of \textsc{nand}-tree
evaluation, $A$ can always win if she plays optimally, no matter what
strategy $B$ employs, and for $0$-instances, $B$ can always win if he
plays optimally.

The results of \cite{ZKH12} show a connection between the difficulty
of winning the game associated with a \textsc{nand}-tree instance, as
measured by the \emph{fault complexity}, and the quantum query
complexity of evaluating a \textsc{nand}-tree instance. A
\textsc{nand}-tree instance has fault complexity $2^k$ if the player
with the winning strategy will encounter at most $k$ \emph{faults} on
any winning path, where a fault is a node at which one choice leads to
a sub-game in which the player still has a winning strategy, while the
other does not.  In this paper, we consider a more nuanced measure of
the difficulty of winning the two-player game associated with a
\textsc{nand}-tree, which we call the \emph{average choice
complexity}. The average choice
complexity is upper bounded by the fault complexity. Like
the fault complexity, the average choice complexity is related to the
number of critical decisions a player must make to win the two-player
game, but rather than describing the worst case, as with the fault
measure, the average choice complexity depends on decisions made in a
good strategy, averaged over the opponent's strategy. The average
choice complexity also provides a more subtle characterization of the
criticality of the choice made at a particular node, and captures the
fact that nodes that are not faults can still be important decision
points.

By exploiting an elegant connection between \textsc{nand}-trees (or
$(\wedge,\vee)$-formulas in general) and $st$-connectivity problems on
certain graphs, we find that the average choice complexity is exactly
the span program witness size in a span program for evaluating
\textsc{nand}-trees. We are thus able to generalize the
superpolynomial speedup of \cite{ZKH12} to show that the bounded error
quantum query complexity of evaluating \textsc{nand}-trees, when we
are promised that the average choice complexity of the input is at
most $W$, is $O(W)$ (Theorem \ref{thm:easy-instances}).

A related problem to \emph{evaluating} a \textsc{nand}-tree instance
is \emph{winning} a \textsc{nand}-tree instance.  The connection
between \textsc{nand}-tree evaluation and winning strategies suggests
a strategy for winning the game: at every node, solve the instances of
\textsc{nand}-tree rooted at each child, and if possible, always
choose one that evaluates to 1 if you are Player $A$, or 0 if you are
Player $B$. The total number of quantum queries employed by this
strategy is $\widetilde{O}(\sqrt{N})$. As a second application of the
connection between average choice complexity and span program witness
size, we are able to give a strategy that wins a \textsc{nand}-tree
against a random opponent using an expected
$\widetilde{O}(N^{1/4}\sqrt{\mathcal{C}})$ quantum queries, where
$\cal C$ is the average choice complexity (Theorem \ref{thm:winning}).
This result gives an operational interpretation of average choice
complexity as the difficulty, in quantum query complexity, of winning
against a random opponent. The algorithm uses a recent result of
\cite{IJ15} that constructs an algorithm for estimating the span
program witness size of any span program. The player uses
this estimation algorithm to
estimate the average choice complexity of both children and then
chooses the path that is easier on average.

We achieve the connection between average choice complexity and
witness size by exploiting a link between evaluating
\textsc{nand}-trees and solving $st$-connectivity in certain graphs,
which may be of further interest. In particular, we find that 
 $1$-valued \textsc{nand}-tree instances are related to $st$-connected subgraphs of a certain 
 graph $G$, while
$0$-valued instances are related to $st$-connected subgraphs of the dual of $G$. 

The relationship we uncover between \textsc{nand}-trees,
$st$-connectivity problems on graphs and their duals, and two-player games hints at
many connections that might be explored. When is the quantum query complexity of evaluating Boolean formulas characterized
by graph connectivity problems? What role do graphs and their duals play in 
span program algorithms? Our current span program algorithm characterizes
properties of a two-player game assuming the opponent plays randomly; 
by adjusting the span program algorithm,
could we characterize the same two-player game, but with different
strategies for the players? On the other hand, it may be that these strategies 
are somehow particularly natural for quantum algorithms, in which case,
we would like to understand why.

\paragraph{Outline} In Section \ref{sec:Preliminaries}, we give some
requisite background information. In Section  
\ref{sec:average-choice-complexity}, we define average choice complexity. In Section
\ref{sec:eval}, we prove our first main result: that
\textsc{nand}-trees with average choice complexity at most $W$ can be
evaluated in $O(W)$ quantum queries. We also show in this section that  the average
choice complexity can be estimated using
$\tO\(\sqrt{\sop C(x)} N^{1/4}\)$ queries, where $\sop C(x)$ is
the average choice complexity of the instance. We prove these results by exploiting a
connection between \textsc{nand}-trees and $st$-connectivity on a
family of graphs and their duals. We first relate the average choice
complexity to the effective resistance of these graphs (Section
\ref{sec:nand-graphs}). Then, we construct and analyze a span program for
$st$-connectivity problems on these graphs, such that the witness size
of the span program depends on the effective resistance of the graph (Section
\ref{sec:span}). In Section \ref{sec:win_nand_tree}, we give
a quantum algorithm for playing the two-player \textsc{nand}-tree
game that makes use of our algorithm for estimating ${\cal C}(x)$.
Finally, in Section \ref{sec:open_problems}, we discuss some open problems.

\section{Preliminaries}\label{sec:Preliminaries}

In this section, we will provide the necessary information to understand our paper:
in Section \ref{sec:intro_NAND}, we describe \textsc{nand}-trees, in Section
\ref{sec:intro_span} we provide some basic results about span programs
and quantum algorithms, and in Section \ref{sec:intro_graph} we introduce
some key concepts from graph theory.

\subsection{NAND-Trees}\label{sec:intro_NAND}

A \textsc{nand}-tree is a full binary tree of depth $\ell$ in
which each leaf is labeled by a variable $x_i$, and each internal
node is labeled by either $\vee$ (\textsc{or}), if it is at even distance from the
leaves, or $\wedge$ (\textsc{and}), if it is at odd
distance from the leaves. While a \textsc{nand}-tree of
depth $\ell$ is sometimes defined as a Boolean formula of \textsc{nand}s composed to
depth $\ell$, we will instead think of the formula as an alternation of
\textsc{and}s and \textsc{or}s --- when $\ell$ is even, these two characterizations are identical. An instance of \textsc{nand}-tree is a binary
string $x\in \{0,1\}^{N}$, where $N=2^{\ell}$.  We use $\textsc{nand}_\ell$ to denote a complete \textsc{nand}-tree
of depth $\ell$. For instance, a \textsc{nand}-tree of depth $2$ represents
the formula:
\begin{equation}
\textsc{nand}_2(x_1,x_2,x_3,x_4)
=(x_1\wedge x_2)\vee (x_3\wedge x_4).
\end{equation}
If $\textsc{nand}_{\ell}(x)=1$, we say that $x$ is a
$1$-instance, and otherwise we say it is a 0-instance.

A \textsc{nand}-tree instance can be associated with a two-player game. Let $x\in\{0,1\}^{N}$ be an input to a
depth-$\ell$ \textsc{nand}-tree, so $N=2^{\ell}$. Then we associate this
input to a game played on a full binary tree as in Figure \ref{fig:nandtree},
where the leaves take the values $x_i$. The game starts at the root
node $v$, which we call the live node. If the live node is at
even distance from the leaves (an \textsc{or} node)  Player $A$ chooses to move to
one of the live node's children. The chosen child now becomes the live node. At each further round, as
long as the live node is not a leaf, if the live node is at
even (respectively odd) distance from the leaves, Player $A$ (resp.\ Player $B$) chooses one
of the live node's children to
become the live node. When the live node becomes a leaf, if the leaf
has value $1$, then Player $A$ wins, and if the leaf has value $0$, then
Player $B$ wins. The sequence of moves by each player determines a path
from the root to a leaf.

A simple inductive argument shows that if $x$ is a 1-instance of \textsc{nand}-tree, then there exists a strategy by which Player
$A$ can always win, no matter what strategy $B$ employs; and if $x$ is a $0$-instance, there exists a strategy by which Player $B$ can always win. 
We say an input $x$ is $A$-winnable if it has value 1 and
$B$-winnable if it has value $0$.

In this paper, we will consider trees of even depth for simplicity, but the arguments can be easily
extended to odd depth trees. 

\subsection{Span Programs}\label{sec:intro_span}

In this section, we review the concept of span programs, and their use in quantum algorithms. Span programs \cite{KW93} were first introduced to the study of quantum algorithms by Reichardt and \v{S}palek \cite{RS12}. They have since proven to be a tool of immense importance for designing quantum algorithms in the query model. 

\begin{definition}[Span Program]\label{def:span}
A span program $P=(H,V,\tau,A)$ on $\{0,1\}^n$ is made up of 
\begin{enumerate}
\item finite-dimensional inner product spaces $H=H_1\oplus \dots \oplus H_n$, and $\{H_{j,b}\subseteq H_j\}_{j\in [n],b\in \{0,1\}}$ such that $H_{j,0}+H_{j,1}=H_j$,
\item a vector space $V$,
\item a \emph{target vector} $\tau\in V$, and
\item a linear operator $A:H\rightarrow V$.
\end{enumerate}
For every string $x\in \{0,1\}^n$, we associate the subspace $H(x):=H_{1,x_1}\oplus \dots\oplus H_{n,x_n}$, and an operator $A(x):=A\Pi_{H(x)}$, where $\Pi_{H(x)}$ is the orthogonal projector onto $H(x)$. 
\end{definition}

Given a span program, one can obtain a quantum 
query algorithm based on that span program. The parameters
of the span program that determine the query complexity of
the algorithm are the positive and negative witness sizes:

\begin{definition}[Positive and Negative Witness]
Let $P$ be a span program on $\{0,1\}^n$ and let $x$ be a string $x\in \{0,1\}^n$.
Then we call $\ket{w}$ a \emph{positive witness for $x$ in $P$} if
$\ket{w}\in H(x)$, and $A\ket{w}=\tau$. We define the \emph{positive
witness size of $x$} as:
$$w_+(x,P)=w_+(x)=\min\{\norm{\ket{w}}^2: \ket{w}\in H(x):A\ket{w}=\tau\},$$
if there exists a positive witness for $x$, and $w_+(x)=\infty$ otherwise.
We call a linear map $\omega:V\rightarrow \mathbb{R}$ a \emph{negative
witness for $x$ in $P$} if $\omega A\Pi_{H(x)} = 0$ and $\omega\tau =
1$. We define the \emph{negative witness size of $x$} as:
$$w_-(x,P)=w_-(x)=\min\{\norm{\omega A}^2:{\omega\in \mathcal{L}(V,\mathbb{R}): 
\omega A\Pi_{H(x)}=0, \omega\tau=1}\},$$
if there exists a negative witness, and $w_-(x)=\infty$ otherwise.
If $w_+(x)$ is finite, we say that $x$ is \emph{positive} (wrt.\ $P$), 
and if $w_-(x)$ is finite, we say that $x$ is \emph{negative}. We let 
$P_1$ denote the set of positive inputs, and $P_0$ the set of negative 
inputs for $P$.  
\end{definition}

For a decision problem $f:X\rightarrow \{0,1\}$, with $X\subseteq\{0,1\}^n$, we say
that $P$ \emph{decides} $f$ if $f^{-1}(0)\subseteq P_0$ and
$f^{-1}(1)\subseteq P_1$. In that case, we can use $P$ to construct a
quantum algorithm that decides $f$, where the number of queries used
by the algorithm depends on the witness sizes:

\begin{theorem}[\cite{Rei09}]\label{thm:span-decision}
Fix $X\subseteq\{0,1\}^n$ and $f:X\rightarrow\{0,1\}$, and let $P$ be a span program on $\{0,1\}^n$ that decides $f$. 
Let $W_+(f,P)=\max_{x\in f^{-1}(1)}w_+(x,P)$ 
and $W_-(f,P)=\max_{x\in f^{-1}(0)}w_-(x,P)$. 
Then there is a bounded error quantum algorithm that decides $f$ with quantum query complexity $O(\sqrt{W_+(f,P)W_-(f,P)})$.
\end{theorem}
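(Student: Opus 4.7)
The plan is to follow Reichardt's conversion of a span program into a quantum query algorithm, reducing the decision problem to distinguishing whether a particular easy-to-implement unitary $U(x)$ has a $+1$ eigenvector with large overlap on a known initial state, or instead has essentially no spectral weight near phase $0$. The decision is then extracted by approximate phase estimation.

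First I would extend the ambient space slightly, adjoining a one-dimensional target line $\mathrm{span}\{|t\rangle\}$ to $H$ and extending $A$ to $\tilde A$ by sending $|t\rangle \mapsto -\tau$, so that a positive witness $|w\rangle$ for $x$ yields a vector $|w\rangle + |t\rangle$ lying in $\ker \tilde A \cap (H(x) \oplus \mathrm{span}\{|t\rangle\})$. Let $R_A$ be the reflection about $\ker \tilde A$, which is input-independent and can be implemented without queries, and let $R_H$ be the reflection about $H(x) \oplus \mathrm{span}\{|t\rangle\}$, which can be implemented in $O(1)$ queries to $x$ since inside each block only the choice between the fixed subspaces $H_{j,0}$ and $H_{j,1}$ depends on $x_j$. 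Set $U(x) := R_A R_H$.

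Next I would analyze the spectrum of $U(x)$ using the two witnesses. For $x \in P_1$, the vector constructed above is a $+1$ eigenvector of $U(x)$ whose overlap with the initial state $|t\rangle$ is $\Omega(1/\sqrt{w_+(x)})$. For $x \in P_0$, an optimal negative witness certifies, through the \emph{effective spectral gap lemma}, that the spectral weight of $|t\rangle$ on eigenvectors of $U(x)$ with phases in $[-\theta,\theta]$ is at most $O(\theta^2 w_-(x))$. Running phase estimation of $U(x)$ on $|t\rangle$ at precision $\theta = \Theta(1/\sqrt{W_+(f,P)\,W_-(f,P)})$ then produces a measurable gap between the acceptance probabilities in the two cases, which amplitude amplification on the phase-near-$0$ event boosts to a constant distinguishing advantage. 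Each application of $U(x)$ costs $O(1)$ queries, and balancing phase-estimation precision against amplification rounds gives total query cost $O(\sqrt{W_+(f,P)\,W_-(f,P)})$.

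The main obstacle is proving the effective spectral gap lemma quantitatively. By Jordan's lemma, the pair $(R_A, R_H)$ decomposes the ambient Hilbert space into common invariant subspaces of dimension at most two, on each of which $U(x)$ acts as rotation by twice the principal angle between $\ker \tilde A$ and $H(x) \oplus \mathrm{span}\{|t\rangle\}$. The negative witness $\omega$ has to be converted, via the duality between short vectors in $\ker \tilde A \cap (H(x) \oplus \mathrm{span}\{|t\rangle\})$ and large linear functionals vanishing on $\tilde A \Pi_{H(x) \oplus \mathrm{span}\{|t\rangle\}}$, into a uniform lower bound on the principal angles of those blocks in which $|t\rangle$ has nonzero projection. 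Assembling these block-wise estimates produces the required bound on spectral weight near phase $0$; this spectral/witness bookkeeping is the technical heart of the argument.
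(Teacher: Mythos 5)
The paper does not prove this statement at all---it is imported verbatim from Reichardt's work \cite{Rei09}---so the only meaningful comparison is with the standard proof in that literature, which is exactly the route you sketch: build $U(x)=R_AR_H$ from the two reflections, use a positive witness to exhibit a $+1$ eigenvector overlapping the start state, use a negative witness together with the effective spectral gap lemma (proved via the Jordan two-dimensional block decomposition, as you indicate) to kill the spectral weight near phase $0$, and finish with phase estimation. Structurally your sketch is the right argument, and your implementation remarks ($R_A$ query-free, $R_H$ in $O(1)$ queries) are correct.

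There is, however, a genuine quantitative gap in your final cost accounting. With the unscaled extension $|t\rangle\mapsto-\tau$, the positive-case overlap of the $+1$ eigenvector with $|t\rangle$ is only $\Omega\left(1/\sqrt{w_+(x)}\right)$, so after phase estimation at precision $\theta=\Theta\left(1/\sqrt{W_+W_-}\right)$ you are trying to distinguish acceptance probability $\approx 1/W_+$ from $\approx \theta^2 W_-\approx 1/W_+$ (up to the constant you tune), and boosting this to constant advantage by amplitude amplification costs $\Theta\left(\sqrt{W_+}\right)$ rounds, each containing a phase-estimation circuit of cost $\Theta\left(\sqrt{W_+W_-}\right)$; no ``balancing'' of precision against rounds escapes the resulting $\Theta\left(W_+\sqrt{W_-}\right)$ bound, since correctness on negative inputs forces $\theta\lesssim 1/(k\sqrt{W_-})$ when $k$ amplification rounds are used. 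The missing ingredient is the normalization of the span program that the standard proof performs before anything else: scale the target (equivalently, send $|t\rangle\mapsto-\tau/\alpha$ with $\alpha=\sqrt{W_+}$), which multiplies positive witness sizes by $1/W_+$ and negative ones by $W_+$, so that every $1$-input has a witness of norm at most $1$ (constant overlap with $|t\rangle$, no amplification needed) while negative witness sizes become at most $W_+W_-$. A single run of phase estimation at precision $\Theta\left(1/\sqrt{W_+W_-}\right)$ with constant error then separates the two cases, giving the claimed $O\left(\sqrt{W_+W_-}\right)$ queries. Your argument becomes correct once this rescaling step is inserted; as written, the conclusion does not follow from the stated overlap bounds.
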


In \cite{IJ15}, Ito and Jeffery show that additional quantum algorithms
can be derived from a span program $P$. These algorithms depend on the
approximate positive and negative witness sizes. 

\begin{definition}[Approximate Positive Witness]
For any span program $P$ on $\{0,1\}^n$ and $x\in\{0,1\}^n$, we define the
\emph{positive error of $x$ in $P$} as: 
$$e_+(x)=e_+(x,P):=\min\left\{
\norm{\Pi_{H(x)^\bot}\ket{w}}^2:A\ket{w}=\tau\right\}.$$ 
We say $\ket{w}$ is an \emph{approximate positive witness} for
$x$ in $P$ if $\norm{\Pi_{H(x)^\bot}\ket{w}}^2=e_+(x)$ and $A\ket{w}=\tau.$
 We define the \emph{approximate positive witness size} as
$$\tilde{w}_+(x)=\tilde{w}_+(x,P):=\min\left\{\norm{\ket{w}}^2:A\ket{w}=\tau,
\norm{\Pi_{H(x)^\bot}\ket{w}}^2=e_+(x)\right\}.$$ 
\end{definition}

\noindent  Note that if $x\in P_1$, then $e_+(x)=0$. In that case, an
approximate positive witness for $x$ is a positive witness, and
$\tilde w_+(x)=w_+(x)$. For negative inputs,
the positive error is larger than 0. 

We can define a similar notion of approximate negative witnesses: 

\begin{definition}[Approximate Negative Witness]
For any span program $P$ on $\{0,1\}^n$ and $x\in\{0,1\}^n$,
we define the \emph{negative error of $x$ in $P$} as:
$$e_-(x)=e_-(x,P):=\min\left\{\norm{\omega A\Pi_{H(x)}}^2: \omega(\tau)=1\right\}.$$
Any $\omega$ such that $\norm{\omega A\Pi_{H(x)}}^2=e_-(x,P)$ is called an 
\emph{approximate negative witness} for $x$ in $P$. We define 
the \emph{approximate negative witness size} as
$$\tilde{w}_-(x)=\tilde{w}_-(x,P):=\min\left\{\norm{\omega A}^2:
\omega(\tau)=1,\norm{\omega A\Pi_{H(x)}}^2=e_-(x,P)\right\}.$$
\end{definition}

\noindent  Note that if $x\in P_0$, then $e_-(x)=0$. In that case, an
approximate negative witness for $x$ is a negative witness, and
$\tilde w_-(x)=w_-(x)$. For positive inputs,
the negative error is larger than 0.  

Ito and Jeffery give a quantum algorithm to estimate the positive
witness size (resp.\ negative witness size) of a span program; the query complexity of the algorithm
depends on the approximate negative (resp.\ positive) witness sizes:

\begin{theorem}[Witness Size Estimation Algorithm, \cite{IJ15}]\label{thm:span-est}
Fix $X\subseteq\{0,1\}^n$ and $f:X\rightarrow \mathbb{R}_{\geq 0}$. Let $P$ be a
span program such that for all $x\in X$, $f(x)=w_+(x,P)$ and define
$\widetilde{W}_-=\widetilde{W}_-(f,P)=\max_{x\in X}\tilde{w}_-(x,P)$.
There exists a quantum algorithm that estimates $f$ to relative error $\eps$
and that uses $\tO\(\frac{1}{\eps^{3/2}}\sqrt{w_+(x)\widetilde{W}_-}\)$ queries. 
Similarly, let $P$ be a span program such that for all $x\in X$,
$f(x)=w_-(x,P)$ and define
$\widetilde{W}_+=\widetilde{W}_+(f,P)=\max_{x\in X}\tilde{w}_+(x,P)$.
Then there exists a quantum algorithm that estimates $f$ to accuracy
$\eps$ and that uses $\tO\(\frac{1}{\eps^{3/2}}\sqrt{w_-(x)\widetilde{W}_+}\)$
queries.
\end{theorem}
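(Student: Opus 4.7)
The plan is to reconstruct, at a high level, the Ito--Jeffery estimation algorithm. The starting point is the standard span program decision framework of Reichardt that underlies Theorem~\ref{thm:span-decision}: one builds a unitary $U(x)$ given as the product of the reflection through $\ker A$ and the reflection through $H(x)$, and applies phase estimation to an initial state proportional to a vector $\ket{\hat\tau}$ representing the target $\tau$. Two spectral facts drive the analysis: for positive $x$, the overlap of $\ket{\hat\tau}$ with the $+1$-eigenspace of $U(x)$ is $\Theta(1/\sqrt{w_+(x)})$; for negative $x$, there is a phase gap of order $1/\sqrt{w_-(x)}$ around $1$.

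To estimate $w_+(x)$ rather than just decide, I would run amplitude estimation to approximate the amplitude $a(x)$ of $\ket{\hat\tau}$ on the (approximate) $+1$-eigenspace of $U(x)$, and invert the relation $w_+(x) = \Theta(1/a(x)^2)$. Multiplicative $\eps$-accuracy on $w_+$ requires additive $\Theta(\eps\, a)$-accuracy on $a$, which amplitude estimation achieves using $O(1/(\eps\, a)) = O(\sqrt{w_+(x)}/\eps)$ calls to $U(x)$. Since each call to $U(x)$ costs $O(1)$ oracle queries, this already gives the correct pointwise $\sqrt{w_+(x)}$ factor in the cost.

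The role of $\widetilde{W}_-$ is to ensure the phase-estimation step cleanly separates the true $+1$-eigenspace from nearby spurious components. Even when $x$ is positive, so that $w_-(x)=\infty$, the approximate negative witness size $\tilde w_-(x)$ remains finite and upper bounds the effective phase-gap quantity that determines the precision at which we must run phase estimation to pick out the relevant eigenspace. Working with \emph{approximate} witnesses (rather than exact ones) is what lets the algorithm apply uniformly on all inputs, and it is precisely what injects $\sqrt{\widetilde{W}_-}$ into the cost: the phase-estimation precision must be of order $1/\sqrt{\widetilde{W}_-}$, and this contributes that factor to the number of applications of $U(x)$.

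I expect the hardest part to be squeezing out the correct $\eps$-dependence, namely $\eps^{-3/2}$ rather than the naive $\eps^{-1}$ from amplitude estimation, and simultaneously ensuring that the pointwise complexity scales with $w_+(x)$ rather than a worst-case bound $W_+$. The natural route is a variable-time or adaptive scheme: first produce a coarse estimate of $w_+(x)$ to calibrate the phase-estimation precision and the number of amplitude-estimation rounds, then refine, so that the total cost is dominated by the actual witness size at $x$; error-reduction to probability $O(\eps)$ then contributes an extra $1/\sqrt{\eps}$ on top of the $1/\eps$ from amplitude estimation, yielding the $\eps^{-3/2}$ factor. The second statement of the theorem, estimating $w_-(x)$, would then follow by the standard duality of span programs, which swaps positive and negative witnesses by dualising $A$ and $\tau$, so that the same algorithm and analysis apply with the roles of $w_+$ and $w_-$ (and of $\widetilde{W}_-$ and $\widetilde{W}_+$) interchanged.
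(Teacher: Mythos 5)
This statement is not proved in the paper at all: Theorem~\ref{thm:span-est} is imported verbatim from \cite{IJ15} and used as a black box, so there is no internal proof to compare your reconstruction against. Judged on its own terms, your outline does track the actual Ito--Jeffery strategy (the span program unitary built from the two reflections, phase estimation on a state encoding the target $\tau$, an overlap with the small-phase eigenspace governed by $w_+(x)$, approximate negative witnesses controlling the error analysis, amplitude estimation to turn the decision procedure into an estimator, and duality to get the $w_-$ version). But as a proof it has concrete gaps beyond the ones you flag.

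First, the mechanism by which $\widetilde{W}_-$ enters is misstated. There is no fixed ``phase gap of order $1/\sqrt{\widetilde{W}_-}$'' on positive inputs; what \cite{IJ15} actually uses is the effective spectral gap lemma, which says that an approximate negative witness bounds the amplitude of the (normalized) target state on eigenvectors of $U(x)$ with phase in $(0,\Theta)$ by something proportional to $\Theta\sqrt{\tilde w_-(x)}$. Hence the relation $w_+(x)=\Theta(1/a(x)^2)$ that you invert is only valid once the phase-estimation resolution $\Theta$ is chosen small enough that this spurious amplitude is an $\eps$-fraction of the signal $\sim 1/\sqrt{w_+(x)}$; the required $\Theta$ couples $\eps$, $w_+(x)$ and $\widetilde{W}_-$ jointly, and your claim that the precision is ``of order $1/\sqrt{\widetilde{W}_-}$'' does not by itself produce the stated bound. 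You also need the matching lower bound on the zero-phase overlap coming from the optimal positive witness (plus a normalization/scaling of the span program so that this overlap is controlled), which you assert as a $\Theta(\cdot)$ without argument. Second, your accounting for the $\eps^{-3/2}$ is wrong: reducing the failure probability to $O(\eps)$ costs only a $\log(1/\eps)$ factor by majority voting, not $\eps^{-1/2}$. In \cite{IJ15} the extra $\eps^{-1/2}$ comes precisely from the phase resolution having to scale like $\sqrt{\eps}$ (so that the small-nonzero-phase contamination is a relative-$\eps$ error), multiplied against the $1/\eps$ from amplitude estimation. Finally, since $w_+(x)$ is not known in advance, the algorithm must run a geometric search over guesses to get the pointwise $\sqrt{w_+(x)}$ (rather than worst-case) dependence; your ``adaptive scheme'' remark points in the right direction but is not a proof. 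The duality argument for the $w_-$ half is fine and is indeed how the cited work handles it.
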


\subsection{Graph Theory}\label{sec:intro_graph}

 For an undirected graph $G$, we let
$V(G)$ denote the vertices of $G$, and $E(G)$ denote the edges of $G$. We
also define $\overrightarrow{E}(G)=\{(u,v):\{u,v\}\in E(G)\}$
We will consider the effective resistance of graphs, for which we need the concept of a unit flow:
\begin{definition}[Unit Flow] 
Let $G$ be an undirected graph with $s,t\in V(G)$. Then a \emph{unit $st$-flow} on $G$ is a function $\theta:\overrightarrow{E}(G)\rightarrow\mathbb{R}$ such that:
\begin{enumerate}
\item For all $(u,v)\in \overrightarrow{E}(G)$, $\theta(u,v)=-\theta(v,u)$;
\item $\sum_{v\in \Gamma(s)}\theta(s,v)=\sum_{v\in\Gamma(t)}\theta(v,t)=1$, where $\Gamma(u)$ is the neighbourhood of $u$ in $G$; and 
\item for all $u\in V(G)\setminus\{s,t\}$, $\sum_{v\in\Gamma(u)}\theta(u,v)=0$. 
\end{enumerate}
\end{definition}

\noindent A \emph{circulation} is a function $\theta$ that satisfies (1), and in addition, satisifes (3) for \emph{all} vertices in $V(G)$. 

\begin{definition}[Unit Flow Energy]
Given a unit $st$-flow $\theta$ on a graph $G$,  
the \emph{unit flow energy} is 
$J(\theta)=\sum_{\{u,v\}\in {E}}\theta(u,v)^2.$
\end{definition}

\noindent The effective resistance between $s$ and $t$ is the smallest energy 
of any unit $st$-flow:
\begin{definition}[Effective Resistance]
Fix an undirected graph $G$, and $s,t\in V(G)$. 
The effective resistance is defined $R_{s,t}(G)=\infty$ 
if $s$ and $t$ are not connected in $G$, and otherwise, 
$R_{s,t}(G)=\min_\theta\sum_{\{u,v\}\in {E}}\theta(u,v)^2$, 
where $\theta$ runs over all unit $st$-flows in $G$. 
\end{definition}

\noindent We will also look at dual graphs:
\begin{definition}[Dual Graph]

Let $G$ be a planar graph (with an implicit planar embedding). The \emph{dual} graph, $G^\dagger$, is
defined as follows. For every face $f$ of $G$, $G^\dagger$ has a
vertex $v_f$, and any two vertices are adjacent if their corresponding
faces share an edge, $e$. We call the edge between two such vertices the
\emph{dual edge} to $e$, $e^\dagger$.

When $G$ is a directed graph, we assign edge directions to the dual
 by orienting each dual edge $\pi/2$ radians clockwise from the primal edge. 
\end{definition}

\section{Average Choice Complexity}\label{sec:average-choice-complexity}

In \cite{ZKH12}, Zhan, Kimmel and Hassidim find a relationship between
the difficulty of playing the two-player game associated with a
\textsc{nand}-tree, and the witness size. Specifically, they find that trees with a smaller number of \emph{faults}, or critical decisions for a player playing the associated two-player game, are easier to evaluate on a quantum computer. In this section, we give a more nuanced measure of the difficulty of a two-player game, the \emph{average choice complexity}, which we will later see is also connected with the quantum query complexity of evaluating a \textsc{nand}-tree. 

Zhan \emph{et al}.\ measure the difficulty of a tree in terms of
faults. When playing a game, a fault is a node where if the player
chooses the correct direction, she can win the game if she plays
optimally, but if the player chooses the wrong direction, she is no
longer guaranteed to win the game, and in fact will lose if the
opposing player plays optimally.  Equivalently, it is a node in which one child corresponds to a 0-instance of \textsc{nand}-tree, and the other child corresponds to a 1-instance. Let $x$ be a $Z$-winnable input for
$Z\in\{A,B\}.$ Consider all paths such that Player $Z$ wins, and also
Player $Z$ never makes a move that would allow her opponent to win. We call
such paths \emph{$Z$-winning paths}.
Zhan \emph{et al}.\ call a tree  $k$-fault\footnote{We have actually
used the more refined definition of $k$-fault from \cite{K11}} if each
of these paths encounters at most $k$ faults. Then the witness size of
such a tree is less than $2^k$ \cite{K11}, and we call $2^k$ the 
\emph{fault complexity} of the tree. Fault complexity
encapsulates the worst case number of critical decisions that must be made in
order to win the game, on any winning path.

In our case, the average choice complexity, defined shortly, does not characterize the number
of decisions in the worst case over any winning path, but is rather a characterization of
the expected number of decisions required when playing against a random opponent. 

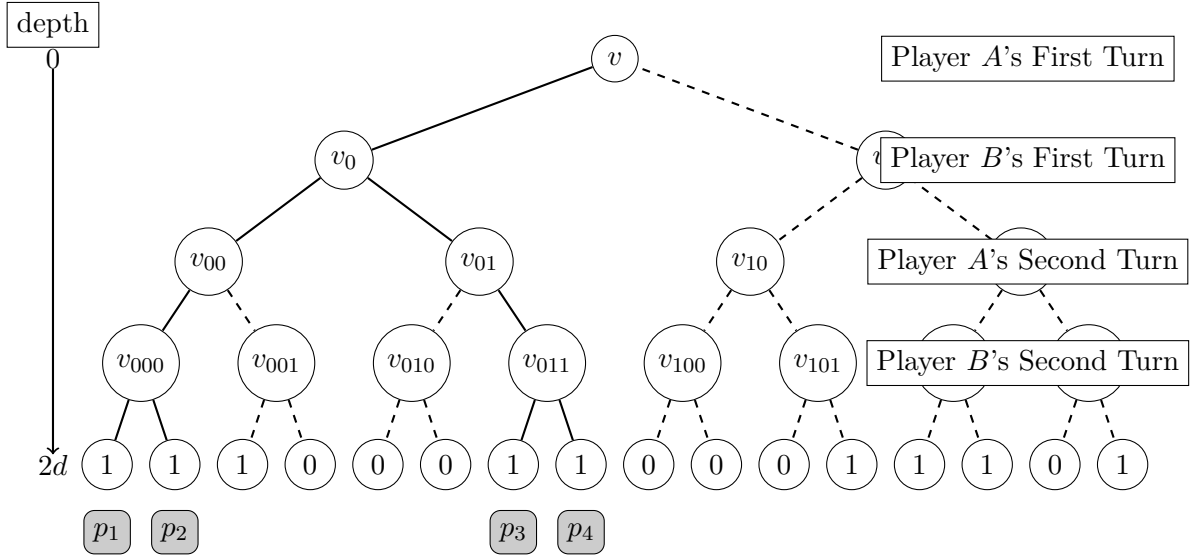
\begin{figure}[h]
\centering
\begin{tikzpicture}[scale=.9]
\tikzstyle{vertex} = [circle,draw,fill=white,minimum size=.5em]
\tikzstyle{operator} = [rectangle,rounded corners,draw,
fill = black!20!white,
minimum size=1.5em]
\tikzstyle{turn} = [rectangle,draw,fill=white,minimum size=1.5em]
\node[vertex] (v0000) at (-7.5,0) {$1$};
\node[operator] (p1) at (-7.5,-1) {$p_1$};
\node[vertex] (v0001) at (-6.5,0) {$1$};
\node[operator] (p2) at (-6.5,-1) {$p_2$};
\node[vertex] (v0010) at (-5.5,0) {$1$};
\node[vertex] (v0011) at (-4.5,0) {$0$};
\node[vertex] (v0100) at (-3.5,0) {$0$};
\node[vertex] (v0101) at (-2.5,0) {$0$};
\node[vertex] (v0110) at (-1.5,0) {$1$};
\node[operator] (p3) at (-1.5,-1) {$p_3$};
\node[vertex] (v0111) at (-0.5,0) {$1$};
\node[operator] (p4) at (-0.5,-1) {$p_4$};
\node[vertex] (v1000) at (0.5,0) {$0$};
\node[vertex] (v1001) at (1.5,0) {$0$};
\node[vertex] (v1010) at (2.5,0) {$0$};
\node[vertex] (v1011) at (3.5,0) {$1$};
\node[vertex] (v1100) at (4.5,0) {$1$};
\node[vertex] (v1101) at (5.5,0) {$1$};
\node[vertex] (v1110) at (6.5,0) {$0$};
\node[vertex] (v1111) at (7.5,0) {$1$};
\node[vertex] (v000) at (-7,1.5) {$v_{000}$};
\node[vertex] (v001) at (-5,1.5) {$v_{001}$};
\node[vertex] (v010) at (-3,1.5) {$v_{010}$};
\node[vertex] (v011) at (-1,1.5) {$v_{011}$};
\node[vertex] (v100) at (1,1.5) {$v_{100}$};
\node[vertex] (v101) at (3,1.5) {$v_{101}$};
\node[vertex] (v110) at (5,1.5) {$v_{110}$};
\node[vertex] (v111) at (7,1.5) {$v_{111}$};
\node[vertex] (v00) at (-6,3) {$v_{00}$};
\node[vertex] (v01) at (-2,3) {$v_{01}$};
\node[vertex] (v10) at (2,3) {$v_{10}$};
\node[vertex] (v11) at (6,3) {$v_{11}$};
\node[vertex] (v0) at (-4,4.5) {$v_{0}$};
\node[vertex] (v1) at (4,4.5) {$v_{1}$};
\node[vertex] (v) at (0,6) {$v$};
\path (v000) edge[thick] (v0001);
\path (v000) edge[thick] (v0000);
\path[dashed] (v00) edge[thick] (v001);
\path (v00) edge[thick] (v000);
\path (v0) edge[thick] (v01);
\path (v0) edge[thick] (v00);
\path[dashed] (v001) edge[thick] (v0011);
\path[dashed] (v001) edge[thick] (v0010);
\path (v01) edge[thick] (v011);
\path[dashed] (v01) edge[thick] (v010);
\path[dashed] (v1) edge[thick] (v11);
\path (v011) edge[thick] (v0111);
\path (v011) edge[thick] (v0110);
\path[dashed] (v111) edge[thick] (v1110);
\path[dashed] (v111) edge[thick] (v1111);
\path[dashed] (v010) edge[thick] (v0101);
\path[dashed] (v010) edge[thick] (v0100);
\path[dashed] (v100) edge[thick] (v1001);
\path[dashed] (v100) edge[thick] (v1000);
\path[dashed] (v101) edge[thick] (v1011);
\path[dashed] (v101) edge[thick] (v1010);
\path[dashed] (v110) edge[thick] (v1101);
\path[dashed] (v110) edge[thick] (v1100);
\path[dashed] (v10) edge[thick] (v101);
\path[dashed] (v10) edge[thick] (v100);
\path[dashed] (v1) edge[thick] (v10);
\path[dashed] (v11) edge[thick] (v111);
\path[dashed] (v11) edge[thick] (v110);
\path[dashed] (v1) edge[thick] (v);
\path (v0) edge[thick] (v);
\node[turn] (p1m) at (6.1,6) {$\textrm{Player $A$'s First Turn}$};
\node[turn] (p1m) at (6.1,4.5) {$\textrm{Player $B$'s First Turn}$};
\node[turn] (p1m) at (6.1,3) {$\textrm{Player $A$'s Second Turn}$};
\node[turn] (p1m) at (6.1,1.5) {$\textrm{Player $B$'s Second Turn}$};
\node (h1) at (-8.3,0) {};
\node (h2) at (-8.3,6) {};
\path (h2) edge[thick,->] (h1);
\node[turn] (height) at (-8.3,6.5) {depth}; 
\node (zh) at (-8.3,0) {$2d$};
\node (zb) at (-8.3,6) {$0$};
\end{tikzpicture}
\caption{A winning strategy for Player $A$ is defined by the
solid lines. There are four possible paths that the
winning strategy might take, labeled by $p_i$.}
\label{fig:nandtree}
\end{figure}

Consider an even-depth \textsc{nand}-tree instance $x\in\{0,1\}^{2^{2d}}$. A $Z$-strategy $\sop S(x)$ for input $x$  is a complete deterministic
proscription for how Player $Z$ should act given any action of the
opposing player.  Any strategy consists of $2^d$ paths from the root to a leaf, which
describe all possible responses to the opposing player. A
$Z$-{\it{winning}} strategy is a $Z$-strategy $\sop S$ such that all paths are $Z$-winning paths.  An example of a winning strategy for Player $A$ and
the paths of the strategy are given in Figure \ref{fig:nandtree}.  In
a valid strategy, only one choice at each of Player $Z$'s turns must be
on a path. For example, in Figure \ref{fig:nandtree}, at node $v$, to
follow any of the paths, Player $A$ must choose to go to node $v_0$
rather than~$v_1$. Along any given path $p$, let $\nu_Z(p)$ be the set
of nodes along the path  at which it is Player $Z$'s turn. Thus, $\nu_A(p)$ contains those nodes in $p$ at even distance $>0$ from the leaf, and $\nu_B(p)$ contains those nodes at odd distance from the leaf.

To each node $v$, we assign a criticality parameter $c(v)$ that ranges between $1$
and $2$, and is a representation of how important that decision is. A
value of $1$ signals that the decision made by the player at this node is inconsequential: loosely speaking, there are just as many ways to win in either sub-game the player could choose. However a
criticality parameter of $2$ signals that the decision is critical: one of the sub-games is winnable, and the other is not. In other  words, any node with criticality value 2 is a
fault node.

\begin{definition}[Node Criticality and Average Choice Complexity]\label{def:av_choice}

For $x\in\{0,1\}^{2^\ell}$ for $\ell\geq 0$ (note $\ell$ can be even or odd), let ${\cal C}_A(x)$ be the average choice complexity for Player $A$
on input $x$, and ${\cal C}_B(x)$ be the average choice complexity for Player $B$.
 If $x$ is $A$-winnable, then ${\cal C}_B(x)=\infty$, and if
$x$ is $B$-winnable, then ${\cal C}_A(x)=\infty.$ For a depth-$0$ tree, we define ${\cal C}_A(1)=1$ and ${\cal C}_B(0)=1$. 
For depth $\ell$ trees with $\ell>0$, we define average choice complexity, ${\cal C}_Z$,
for $Z\in\{A,B\}$
recursively in terms of the \emph{criticality}, $c(v)$, defined shortly:
\begin{equation} \label{eq:Av_comp_def}
{\cal C}_Z(x) = \left\{\begin{array}{ll}
\min_{{\cal S}\in\mathscr{S}_Z(x)}\mathbb{E}_{p\in {\cal S}}\left[\prod_{v\in \nu_Z(p)}c(v) \right] & \mbox{if $x$ is $Z$-winnable}\\
\infty & \mbox{else,}
\end{array}\right.
\end{equation}
where $\mathscr{S}_Z(x)$ is the set of all $Z$-winning strategies. 

We now define $c(v)$. As a base case, let $v$ be a node whose children are leaves --- that is, it is the root of a depth-1 subtree --- with children labeled $x^0,x^1\in\{0,1\}$. Then define:
\begin{equation}
c(v)=\left\{\begin{array}{ll}
1 & \mbox{if }x^0=x^1\\
2 & \mbox{if }x^0\neq x^1.
\end{array}\right.
\end{equation}
Let $v$ be a node that is the root of a depth-$\ell$ subtree for $\ell>1$. If $\ell$ is even, fix $Z'=A$, and if $\ell$ is odd, fix $Z'=B$.
If $v$ is the root of a tree that is not $Z'$-winnable, then we define $c(v)=1$. If $v$ is a fault, then we define $c(v)=2$. 
Otherwise, let $x^{00},x^{01},x^{10},$ and $x^{11}$ be the respective inputs to the subtrees rooted at
$v_{00},v_{01},v_{10},$ and $v_{11}$, the four grandchildren of $v$. Then ${\cal C}_{Z'}(x^{00}),{\cal C}_{Z'}(x^{01}),{\cal C}_{Z'}(x^{10}),{\cal C}_{Z'}(x^{11})$ are all finite, and we define:
\begin{equation}
c(v)=\frac{\max\left\{{\cal C}_{Z'}(x^{00})+{\cal C}_{Z'}(x^{01}), {\cal C}_{Z'}(x^{10})+{\cal C}_{Z'}(x^{11})\right\}}{\mathrm{avg}\left\{{\cal C}_{Z'}(x^{00})+{\cal C}_{Z'}(x^{01}), {\cal C}_{Z'}(x^{10})+{\cal C}_{Z'}(x^{11})\right\}}.
\end{equation}

\noindent Note that exactly one of ${\cal C}_A(x)$ and ${\cal C}_B(x)$ is finite. Finally, we define:
\begin{equation}
{\cal C}(x)=\min\{{\cal C}_A(x),{\cal C}_B(x)\}.
\end{equation}
\end{definition}

The criticality
should be thought of as measuring the difference between making a
random choice at node $v$, and making a good choice at node $v$, with
$c(v)=1$ indicating that the two choices at $v$ are equal. The expressions 
${\cal C}_{Z'}(x^{00})+{\cal C}_{Z'}(x^{01})$ and ${\cal C}_{Z'}(x^{00})+{\cal C}_{Z'}(x^{01})$ are proportional to the average complexity 
(averaged over the opposing player's next decision) faced by Player $Z'$ in each of the respective paths she might take. If these two values are nearly the same, it does not matter so much which path Player $Z'$ takes at this turn, and $c(v)$ is close to $1$. If these two values are very different, then $c(v)$ approaches $2$. 
The criticality parameter is always
in the range $[1,2].$

In Section \ref{sec:win_nand_tree}, we further motivate the average choice complexity by showing that it is related to the quantum query complexity of winning a two-player \textsc{nand}-tree game. 
We now compare the average choice complexity to fault complexity.
If we let
$\bar{c}(v)$ assign a value of $2$ to every fault node, and 1 to every
non-fault, and take the max over all $Z$-winning paths $p$, we recover the fault
complexity of Zhan \emph{et al}.:

\begin{equation}
{\cal F}_Z(x)=\max_{p\in P}\prod_{v\in \nu_Z(p)}\bar{c}(v)=\max_{p\in P}2^{k_p},
\end{equation}
where $x$ is $Z$-winnable, $P$ runs over all $Z$-winning paths, and $k_p$ is the number of faults in $\nu_Z(p)$. 
When $x$ is not Z-winnable, ${\cal F}_Z(x)=\infty.$ Then the fault complexity is
${\cal F}(x)=\min\{{\cal F}_A(x),{\cal F}_B(x)\}.$

We prove in Lemma \ref{lemma:kfault_conn} that ${\cal C}(x)\leq {\cal F}(x)$. 
In contrast to the setting of fault complexity, Definition~\ref{def:av_choice} allows the criticality parameter
to range between 1 and 2, only considers paths in a certain strategy, and takes an average
over those paths rather than looking at the maximum. In the worst case,
when every winning path has exactly $k$ faults, and every non-fault node on every winning path
has criticality $1$, we have ${\cal C}(x)={\cal F}(x)$, but in general, ${\cal C}(x)$ may be significantly smaller. 


\section{Evaluating NAND-Trees with low Average Choice Complexity}\label{sec:eval}

In this section, we generalize the speedup of Zhan \emph{et al}.\ by proving the following theorem.
\begin{theorem}\label{thm:easy-instances}
Fix $W=W(d)$. For all $d$, let $X_d\subseteq \{0,1\}^{2^{2d}}$ be the set
of instances such that ${\cal C}(x)\leq W$.
Then the bounded error quantum query complexity of evaluating $\textsc{nand}$-trees in $X=\bigcup_dX_d$ is $O(W)$.
\end{theorem}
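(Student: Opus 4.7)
The plan is to follow the roadmap the outline of the paper suggests: realize each NAND-tree instance as an $st$-connectivity problem on a planar graph $G=G(x)$ built recursively from the tree, give a span program $P$ for $st$-connectivity whose positive and negative witness sizes are the effective resistances of $G$ and its planar dual $G^\dagger$, and then bound both effective resistances by the average choice complexity $\mathcal{C}(x)$. With those pieces, Theorem~\ref{thm:span-decision} immediately yields query complexity $O\bigl(\sqrt{W_+(f,P)W_-(f,P)}\bigr)=O(W)$.

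First I would set up the graph $G(x)$ so that $x$ is $A$-winnable iff $s$ and $t$ are connected using only edges $e$ that correspond to variables with $x_e=1$, with the recursive combination mirroring the NAND-tree: at an $\vee$-node (Player $A$'s turn), the subgraphs of the two children are placed in parallel between $s$ and $t$, and at a $\wedge$-node (Player $B$'s turn), in series. Because series/parallel compositions in the primal correspond to parallel/series compositions in the dual, $x$ is $B$-winnable iff $s$ and $t$ are connected in $G^\dagger$ through ``$0$'' edges. Then I would write down the natural span program for $st$-connectivity (vector space spanned by directed edges, target vector $\ket{s}-\ket{t}$, and $A$ projecting each edge onto the difference of endpoint vectors), for which it is standard that positive witnesses correspond to unit $st$-flows on the available edges with squared norm equal to the flow energy, so that $w_+(x,P)=R_{s,t}(G(x))$ on $1$-instances; negative witnesses correspond to potential functions separating $s$ from $t$, equivalently to flows in the dual, giving $w_-(x,P)=R_{s,t}(G(x)^\dagger)$ on $0$-instances.

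The main work is then the purely combinatorial lemma that both $R_{s,t}(G(x))\le\mathcal{C}(x)$ when $x$ is $A$-winnable and $R_{s,t}(G(x)^\dagger)\le\mathcal{C}(x)$ when $x$ is $B$-winnable. I would prove this by induction on the tree depth, using the series and parallel resistance laws. In the parallel case (at an $\vee$-node for the winner), one routes a unit flow by splitting it between the two children in proportion inverse to their effective resistances; the resulting resistance is the harmonic mean, which is at most the average, and this average is exactly the quantity appearing in the recursive definition of criticality for a non-fault node, so the inductive step reproduces the factor $c(v)\cdot\text{avg}$; at a fault, only one child is usable, and the resistance equals the larger child's, producing the factor $2$ that matches $c(v)=2$. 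In the series case (the opponent's node), the resistances add, which matches summing the two grandchild complexities inside the $\max/\text{avg}$ expression in the definition of $c$. The same analysis on the dual handles $B$-winnable instances, and the base case is immediate from the depth-$1$ definition of $c(v)$.

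The step I expect to be the delicate one is aligning this induction cleanly with Definition~\ref{def:av_choice}, because the definition takes a minimum over strategies of an expectation of a product, whereas the flow construction naturally yields a harmonic-mean recursion; I would cast the argument in terms of choosing the flow that implements the optimal strategy on $A$-nodes and letting the opposing player's nodes correspond to series links carrying the full unit of current, so that both the criticality factor $c(v)$ and the averaging over the opponent's moves drop out correctly. Given the lemma and the span-program observations, the theorem follows: Theorem~\ref{thm:span-decision} applied to $P$ on the promise set $X$ gives $W_+(f,P)\le W$, $W_-(f,P)\le W$, hence $O(\sqrt{W\cdot W})=O(W)$ queries.
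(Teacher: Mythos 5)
Your proposal follows essentially the same route as the paper: reduce \textsc{nand}-tree evaluation to $st$-connectivity on the recursively built series--parallel graph and its planar dual, use the standard $st$-connectivity span program whose positive/negative witness sizes are the primal/dual effective resistances (Lemmas \ref{lemma:pos_witness} and \ref{lemma:negative_witness}, up to the factors $1/2$ and $2$), relate resistance to $\mathcal{C}(x)$ by the same series/parallel induction (the paper's Lemma \ref{lemma:kfault_conn}, which in fact establishes equality $R_{s,t}=\mathcal{C}_A$ and $R_{s',t'}=\mathcal{C}_B$ rather than just the inequality you need), and finish with Theorem \ref{thm:span-decision}. The "delicate step" you flag is handled in the paper exactly as you suggest, by unrolling the min-over-strategies definition into the recursion $\mathcal{C}_A(x)=\tfrac12 c(r)\min_b\{\mathcal{C}_A(x^{b0})+\mathcal{C}_A(x^{b1})\}$, which matches the series/parallel resistance laws.
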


Let $k$ be the largest integer such that $2^k\leq W$. Then by Lemma \ref{lemma:kfault_conn}, $X$
contains the set of $k$-fault trees. Thus, the lower bound of
$\Omega((\log\log N-\log k)^k)$ on the classical query complexity of
$k$-fault trees from \cite{ZKH12} implies a lower bound of
$\Omega((\log\log N-\log\log W)^{\log W})$ on evaluating \textsc{nand}-trees in
$X$, making Theorem \ref{thm:easy-instances} a superpolynomial
speedup as well.


To prove Theorem \ref{thm:easy-instances}, we describe in Section \ref{sec:nand-graphs} how \textsc{nand}-tree evaluation is equivalent to solving an $st$-connectivity problem on certain graphs, $G_d(x)$. In Lemma \ref{lemma:kfault_conn}, we show that ${\cal C}_A(x)=R_{s,t}(G_d(x))$ and ${\cal C}_B(x)$ is the effective resistance on the dual-complement of $G_d(x)$. In Section \ref{sec:span}, we present a span program whose positive witness sizes correspond to $R_{s,t}(G_d(x))$ (Lemma \ref{lemma:pos_witness}), and negative witness sizes correspond to the effective resistance on the dual complement (Lemma \ref{lemma:negative_witness}), completing the proof. 

\vskip10pt

\noindent In this section, we also prove the following theorem, which will be used in Section \ref{sec:win_nand_tree}:

\begin{theorem}\label{thm:est}
Let $Z\in\{A,B\}$. The bounded error quantum query complexity of estimating the average choice complexity for Player $Z$ 
of a \textsc{nand}-tree instance $x\in\{0,1\}^N$ to relative accuracy $\eps$
is $\tO\(\frac{1}{\eps^{3/2}}\sqrt{{\cal C}_Z(x)}N^{1/4}\)$.
\end{theorem}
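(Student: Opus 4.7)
The plan is to apply the Ito--Jeffery witness-size estimation algorithm (Theorem~\ref{thm:span-est}) to the span program $P$ for $st$-connectivity on $G_d(x)$ constructed in Section~\ref{sec:span}. By Lemmas~\ref{lemma:pos_witness} and~\ref{lemma:negative_witness}, one has $w_+(x,P)={\cal C}_A(x)$ and $w_-(x,P)={\cal C}_B(x)$ (with both sides being $\infty$ exactly when $x$ is not winnable for the corresponding player). Thus, to estimate ${\cal C}_A(x)$ to relative error $\eps$, I would invoke the positive-witness branch of Theorem~\ref{thm:span-est}, which yields query complexity
\[
\tO\!\left(\tfrac{1}{\eps^{3/2}}\sqrt{{\cal C}_A(x)\,\widetilde{W}_-(P)}\right),
\qquad \widetilde{W}_-(P)=\max_{x}\tilde{w}_-(x,P),
\]
and the case $Z=B$ is symmetric, using the negative-witness branch and $\widetilde{W}_+(P)=\max_{x}\tilde{w}_+(x,P)$.

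What remains is to establish the uniform bound $\widetilde{W}_\pm(P)=O(\sqrt{N})$, where $N=2^{2d}$. My approach is to exhibit, for every input $x$, explicit approximate witnesses of squared norm $O(\sqrt{N})$, relying on the $st$-connectivity structure of $P$. An approximate positive witness corresponds to a unit $st$-flow on $G_d$ whose energy on the edges absent from $H(x)$ achieves the minimum error $e_+(x)$, and an approximate negative witness corresponds to a potential function (cut) with the dual property. In both cases, a convenient upper bound comes from constructing a witness supported on a single root-to-leaf spine of the full graph $G_d$: because $G_d$ has depth $O(2^d)=O(\sqrt{N})$ and constant-degree structure, any such spine carries a unit flow of energy $O(\sqrt{N})$, and dually a potential of norm $O(\sqrt{N})$. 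These give the required bounds independently of $x$, being essentially the effective resistance and effective conductance of the fixed, input-independent graph $G_d$.

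Plugging $\widetilde{W}_\pm(P)=O(\sqrt{N})$ into the expression from the first paragraph yields the claimed query complexity
\[
\tO\!\left(\tfrac{1}{\eps^{3/2}}\sqrt{{\cal C}_Z(x)\cdot\sqrt{N}}\right)
= \tO\!\left(\tfrac{1}{\eps^{3/2}}\sqrt{{\cal C}_Z(x)}\,N^{1/4}\right).
\]
The main obstacle I anticipate is the uniform bound on $\tilde{w}_\pm(x,P)$. Unlike the exact witness sizes, these are defined as minimum norms subject to an $x$-dependent error constraint, so one cannot simply take a single input-independent flow/cut; rather, one must argue that for every $x$, some witness of norm $O(\sqrt{N})$ also realizes the correct minimum error $e_\pm(x)$. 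The structural observation that $G_d$ admits a canonical low-energy spine, combined with the flow-decomposition viewpoint already used in Section~\ref{sec:nand-graphs}, should reduce the task to a single effective-resistance computation on the fixed graph $G_d$.
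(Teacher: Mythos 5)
Your high-level route is the same as the paper's: apply the Ito--Jeffery estimation algorithm (Theorem~\ref{thm:span-est}) to the span program $P_{G_d}$, use the identification of exact witness sizes with ${\cal C}_A,{\cal C}_B$ (up to the harmless constant factors $w_+(x)=\tfrac12{\cal C}_A(x)$, $w_-(x)=2{\cal C}_B(x)$, which you dropped), and reduce everything to the uniform bound $\widetilde{W}_\pm(P_{G_d})=O(2^d)=O(\sqrt N)$. The problem is that this last bound---which is exactly the content of the paper's Lemma~\ref{lemma:approx_neg_wit}---is where your argument has a genuine gap, and your proposed way of closing it would fail. An approximate negative witness for a \emph{positive} input $x$ is not a free choice of potential: it must minimize $\norm{\omega A\Pi_{H(x)}}^2$ subject to $\omega\tau=1$, which forces $\omega$ to be the harmonic (voltage) function on the component of $G_d(x)$ containing $s$ and $t$; it is therefore intrinsically $x$-dependent, and no input-independent ``spine'' potential or flow satisfies the error-minimization constraint. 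You acknowledge this obstacle, but your suggested resolution---reducing to ``a single effective-resistance computation on the fixed graph $G_d$''---cannot work: the quantity to be bounded is $\norm{\omega A}^2$, the energy of the $x$-dependent harmonic potential summed over \emph{all} edges of $G_d$ (including the edges missing from $G_d(x)$, across which the potential can jump), and it is not controlled by any resistance of the fixed graph; indeed $R_{s,t}(G_d)=1$ for every $d$, while the true worst-case value of $\tilde w_-(x)$ really does grow like $2^d$.

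The paper closes this gap with a nontrivial induction on $d$ (Lemma~\ref{lemma:approx_neg_wit}): take the optimal approximate negative witness $\omega$, i.e.\ the harmonic potential on $G_d(x)$ with $\omega(s)=1$, $\omega(t)=0$; restrict it to the four recursive subgraphs $G_{d-1}(x^{bb'})$ of Figure~\ref{fig:subgraphs}; observe that each restriction, after rescaling by the potential drop $(1-\gamma)$, $\gamma$, $(1-\gamma')$, $\gamma'$ across that subgraph, is itself an optimal approximate negative witness for the corresponding subinstance; apply the induction hypothesis to each; and use the maximum principle (so $\gamma,\gamma'\in[0,1]$, whence $(1-\gamma)^2+\gamma^2+(1-\gamma')^2+(\gamma')^2\le 2$), with a separate argument when $l$ or $r$ lies in a different component. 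A correct proof of Theorem~\ref{thm:est} needs this (or some equivalent) uniform control of the approximate witness sizes; without it, the invocation of Theorem~\ref{thm:span-est} only yields a bound in terms of $\widetilde W_\mp$, which you have not actually established to be $O(\sqrt N)$.
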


\subsection{NAND-Trees and $st$-Connectivity}\label{sec:nand-graphs}

In this section, we present a useful relationship between the even-depth \textsc{nand}-tree evaluation problem and the $st$-connectivity problem on certain graphs. 
Let $G_0$ be a graph on 2 vertices, labelled $s$ and $t$, with a
single edge. For $d>0$, let $G_d$ be the graph obtained from $G_{d-1}$
by replacing each edge with a 4-cycle --- or more specifically,
replacing an edge $\{u,v\}$ with two paths of length 2 from $u$ to
$v$.  The first three such graphs are illustrated in Figure
\ref{fig:first3}. It is not difficult to see by recursive argument that
the graph $G_d$ has $2^{2d}$ edges.

\begin{figure}[h]
\centering
\begin{tikzpicture}
\node at (0,0) {
\begin{tikzpicture}
\node at (0,1.25) {$s$};
\filldraw (0,1) circle (.1);
\draw (0,0)--(0,1);
\node at (0,-.25) {$t$};
\filldraw (0,0) circle (.1);
\end{tikzpicture}};

\node at (3,0) {
\begin{tikzpicture}
\node at (0,1.75) {$s$};
\filldraw (0,1.5) circle (.1);
\draw (0,1.5)--(-.5,.5);
\draw (0,1.5) -- (.5,.5);
\filldraw (-.5,.5) circle (.1);
\filldraw (.5,.5) circle (.1);
\draw (0,-.5)--(-.5,.5);
\draw (0,-.5) -- (.5,.5);
\filldraw (0,-.5) circle (.1);
\node at (0,-.75) {$t$};
\end{tikzpicture}};

\node at (7,0) {
\begin{tikzpicture}
\node at (0,1.75) {$s$};
\filldraw (0,1.5) circle (.1);

\draw (0,1.5)--(-.75,1);
\draw (0,1.5)--(-.4,.65);
\draw (0,1.5)--(.75,1);
\draw (0,1.5)--(.4,.65);

\filldraw (-.75,1) circle (.1);
\filldraw (-.4,.65) circle (.1);

\filldraw (.75,1) circle (.1);
\filldraw (.4,.65) circle (.1);

\draw (-1,0)--(-.75,1);
\draw (-1,0)--(-.4,.65);
\draw (1,0)--(.75,1);
\draw (1,0)--(.4,.65);

\filldraw (-1,0) circle (.1);
\filldraw (1,0) circle (.1);

\draw (0,-1.5)--(-.75,-1);
\draw (0,-1.5)--(-.4,-.65);
\draw (0,-1.5)--(.75,-1);
\draw (0,-1.5)--(.4,-.65);

\filldraw (-.75,-1) circle (.1);
\filldraw (-.4,-.65) circle (.1);

\filldraw (.75,-1) circle (.1);
\filldraw (.4,-.65) circle (.1);

\draw (-1,0)--(-.75,-1);
\draw (-1,0)--(-.4,-.65);
\draw (1,0)--(.75,-1);
\draw (1,0)--(.4,-.65);

\filldraw (0,-1.5) circle (.1);
\node at (0,-1.75) {$t$};
\end{tikzpicture}};

\node at (0,-1.25) {$G_0$};
\node at (3,-1.75) {$G_1$};
\node at (7,-2.25) {$G_2$};
\end{tikzpicture}
\caption{The graphs $G_0$, $G_1$ and $G_2$.}\label{fig:first3}
\end{figure}
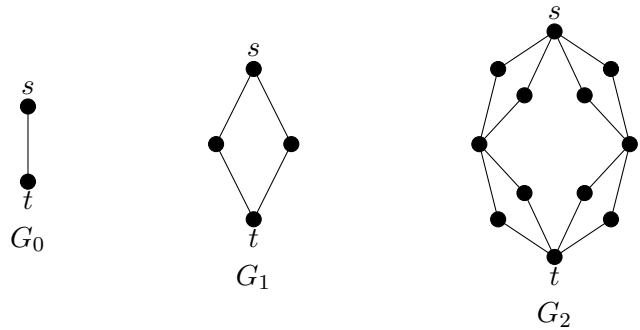

\begin{figure}[h]
\centering
\begin{tikzpicture}
\node at (3,0) {
\begin{tikzpicture}
\node at (0,1.25) {$u$};
\filldraw (0,1) circle (.1);
\draw (0,0)--(0,1);
\node at (.25,.5) {$\tau$};
\filldraw (0,0) circle (.1);
\node at (0,-.25) {$v$};
\end{tikzpicture}
};

\node at (5,0) {\Large$\longmapsto$};
\node at (8,0) {
\begin{tikzpicture}
\node at (0,1.75) {$u$};
\filldraw (0,1.5) circle (.1);
\draw (0,1.5)--(-.5,.5); 
\node at (-1,1) {$\tau 00$};
\draw (0,1.5) -- (.5,.5); 
\node at (-1,0) {$\tau 01$};
\filldraw (-.5,.5) circle (.1);
\filldraw (.5,.5) circle (.1);
\draw (0,-.5)--(-.5,.5); 
\node at (1,1) {$\tau 10$};
\draw (0,-.5) -- (.5,.5); 
\node at (1,0) {$\tau 11$};
\filldraw (0,-.5) circle (.1);
\node at (0,-.75) {$v$};
\end{tikzpicture}
};
\end{tikzpicture}
\caption{This graph shows how to label edges of $G_{d+1}$ given
a label $\tau$ on an edge $\{u,v\}\in E(G_d)$. For example, if $\tau=01$,
the new labels will be $0100,0101,0110,$ and $0111$.}\label{fig:string_label}
\end{figure}
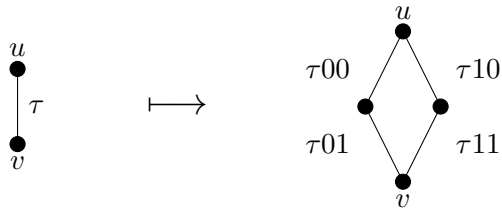

\begin{figure}[h]
\centering
\begin{tikzpicture}
\node at (0,0) {
\begin{tikzpicture}
\node at (0,1.75) {$s$};
\filldraw (0,1.5) circle (.1);
\draw (0,1.5)--(-.5,.5);
\draw (0,1.5) -- (.5,.5);
\filldraw (-.5,.5) circle (.1);
\filldraw (.5,.5) circle (.1);
\draw (0,-.5)--(-.5,.5);
\draw (0,-.5) -- (.5,.5);
\filldraw (0,-.5) circle (.1);
\node at (0,-.75) {$t$};

\node at (-.6,1) {$x_0$};
\node at (.6,1) {$x_2$};
\node at (-.6,0) {$x_1$};
\node at (.6,0) {$x_3$};
\end{tikzpicture}
};

\node at (4,0) {
\begin{tikzpicture}
\filldraw (0,0) circle (.05);
\filldraw (-.5,-1) circle (.05);
\filldraw (.5,-1) circle (.05);
\filldraw (-.75,-2) circle (.05);
\filldraw (-.25,-2) circle (.05);
\filldraw (.75,-2) circle (.05);
\filldraw (.25,-2) circle (.05);

\draw (0,0)--(-.5,-1);
\draw (0,0)--(.5,-1);
\draw (-.5,-1)--(-.75,-2);
\draw (-.5,-1)--(-.25,-2);
\draw (.5,-1)--(.75,-2);
\draw (.5,-1)--(.25,-2);

\node at (-.75,-2.3) {$x_0$};
\node at (-.25,-2.3) {$x_1$};
\node at (.25,-2.3) {$x_2$};
\node at (.75,-2.3) {$x_3$};

\node at (-.25,0) {$\vee$};
\node at (-.75,-1) {$\wedge$};
\node at (.75,-1) {$\wedge$};
\end{tikzpicture}
};

\end{tikzpicture}
\caption{Here we show the relationship between the labeling 
of edges in the graph $G_2$, and the inputs $x_i$ to
 a depth-$2$ \textsc{nand}-tree.}\label{fig:graph-and-tree}
\end{figure}
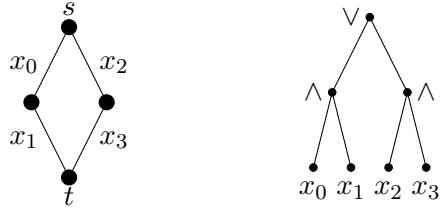

Let $x\in\{0,1\}^{2^{2d}}$ be a depth-$2d$ \textsc{nand}-tree instance.  We can associate the $2^{2d}$ binary
variables $x_i$ with the edges of $G_d$ inductively as
follows. Begin with $d=0$. $G_0$ has one edge,  which we label
with the empty string $\sigma=\emptyset.$ We then associate the variable
$x_\sigma$ with the edge labeled by the string $\sigma$. In this case,
there is only one variable $x_\emptyset$, and it is associated with the edge
labeled by $\emptyset.$

For the inductive step, we have a graph $G_d$, for $d\geq 0$, with
edges labeled by strings. We now want to label the strings of the
graph $G_{d+1}$. To create $G_{d+1}$ from $G_d$ each edge of $G_d$ is
replaced by the graph $G_1$. Consider an edge between vertices 
$u$ and $v$ in the graph $G_d$ that is
labeled by the string $\tau$. When the edge between $u$ and $v$
is replaced by a four-cycle to create $G_{d+1}$, we label the four edges
as in Figure \ref{fig:string_label}.

We can now define a subgraph $G_d(x)$ of $G_d$ by including only those edges of
$G_d$ in which the associated input variable is true. 
This allows us to directly connect instances of depth-$2d$ \textsc{nand}-trees
to subgraphs of $G_d$. In Figure \ref{fig:graph-and-tree}, we show
explicitly how the labels of a graph correspond to the inputs to a \textsc{nand}-tree
for $d=2.$ Then we have the
following:

\begin{lemma}\label{lem:nand-graph}
For all $x\in\{0,1\}^{2^{2d}}$, $x$ is a 1-instance of depth-$2d$ \textsc{nand}-tree evaluation if and only
if $s$ and $t$ are connected in $G_d(x)$.
\end{lemma}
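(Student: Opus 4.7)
The plan is to induct on $d$. The base case $d=0$ is immediate: $G_0$ has the single edge $\{s,t\}$ labeled $\emptyset$, so $s$ and $t$ are connected in $G_0(x)$ iff $x_\emptyset=1$, which is exactly the value of the depth-$0$ ``\textsc{nand}-tree'' $x_\emptyset$.

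For the inductive step, assume the lemma holds for $G_{d-1}$. Given $x\in\{0,1\}^{2^{2d}}$, I would define a contracted instance $y\in\{0,1\}^{2^{2(d-1)}}$ as follows: for each label $\tau$ of an edge of $G_{d-1}$, set
\[
y_\tau \;=\; (x_{\tau 00}\wedge x_{\tau 10})\;\vee\;(x_{\tau 01}\wedge x_{\tau 11}),
\]
which is precisely the value of the depth-$2$ subformula over the four grandchildren. The key geometric observation is that when the edge $\{u,v\}$ of $G_{d-1}$ labeled $\tau$ is expanded into a $4$-cycle in $G_d$ (as in Figure~\ref{fig:string_label}), the two interior vertices of that $4$-cycle have degree $2$ in $G_d$ and are incident only to $u$ and $v$; in particular, the interior vertices of distinct $4$-cycles are disjoint and share no edges. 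Consequently, any path in $G_d(x)$ between two vertices of $G_{d-1}$ must pass through the interior of each $4$-cycle it enters along one of its two length-$2$ ``sides,'' and such a traversal is possible iff $y_\tau=1$.

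From this I would conclude that $u$ and $v$ are connected in $G_d(x)$ iff they are connected in $G_{d-1}(y)$: replace each $4$-cycle with its original edge precisely when $y_\tau=1$, which produces exactly $G_{d-1}(y)$ on the vertex set $V(G_{d-1})$. Applying the inductive hypothesis to $G_{d-1}(y)$, $s$ and $t$ are connected in $G_d(x)$ iff the depth-$2(d-1)$ \textsc{nand}-tree evaluates to $1$ on input $y$. Finally, substituting the definition of $y_\tau$ into that depth-$2(d-1)$ formula yields the depth-$2d$ formula on $x$: each ``leaf'' $y_\tau$ (at distance $0$ from the leaves of the smaller tree) is replaced by a depth-$2$ subformula of the form $(\cdot\wedge\cdot)\vee(\cdot\wedge\cdot)$, which matches the alternation pattern of \textsc{and}s at odd depth and \textsc{or}s at even depth required at the bottom two levels. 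This gives exactly $\textsc{nand}_{2d}(x)=1$.

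The main obstacle is the geometric contraction step, namely justifying that $st$-connectivity in $G_d(x)$ truly reduces to $st$-connectivity in $G_{d-1}(y)$ rather than being affected by some indirect interaction between different $4$-cycles. This is handled by the disjointness-of-interiors observation above, which uses the explicit recursive construction of $G_d$ from $G_{d-1}$; once that is in place, the rest is routine bookkeeping on the labels to match the substitution $y_\tau \mapsto (x_{\tau 00}\wedge x_{\tau 10})\vee(x_{\tau 01}\wedge x_{\tau 11})$ with the two additional levels of the formula tree.
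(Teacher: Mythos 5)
Your overall route is sound and genuinely different from the paper's: you contract the most recently subdivided gadgets (the bottom-level $4$-cycles), reducing $st$-connectivity in $G_d(x)$ to $st$-connectivity in $G_{d-1}(y)$ for a contracted instance $y$, whereas the paper inducts top-down, splitting the depth-$2(d+1)$ formula at its top two levels and decomposing $G_{d+1}(x)$ into four copies of $G_d$ on the subinstances $x^{00},x^{01},x^{10},x^{11}$, glued in series--parallel as in Figure~\ref{fig:subgraphs}. Your degree-$2$/disjoint-interiors observation is exactly what justifies the contraction, and it has the advantage of working directly from the recursive definition of the edge labeling (the contracted gadgets are precisely the last-added $4$-cycles), while the paper's split instead mirrors the top-level structure of the formula; the closing substitution/alternation bookkeeping is the right way to finish.

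However, your definition of $y_\tau$ is mis-indexed, and as written both claims you attach to it are false. Under the paper's labeling (Figure~\ref{fig:string_label}, and the $d=1$ correspondence in Figure~\ref{fig:graph-and-tree}), the two length-$2$ sides of the $4$-cycle replacing the edge labeled $\tau$ carry the pairs $\{x_{\tau 00},x_{\tau 01}\}$ and $\{x_{\tau 10},x_{\tau 11}\}$, and these same pairs are the siblings under the two bottom-level $\wedge$-gates of the depth-$2$ subformula indexed by $\tau$. So the contracted value must be $y_\tau=(x_{\tau 00}\wedge x_{\tau 01})\vee(x_{\tau 10}\wedge x_{\tau 11})$. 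Your formula $(x_{\tau 00}\wedge x_{\tau 10})\vee(x_{\tau 01}\wedge x_{\tau 11})$ instead pairs the two edges incident to $u$ (respectively to $v$), which form neither a $u$--$v$ path nor an $\wedge$-gate of the tree: for instance, if $x_{\tau 00}=x_{\tau 01}=1$ and $x_{\tau 10}=x_{\tau 11}=0$, the left side of the $4$-cycle is fully present and the depth-$2$ subformula evaluates to $1$, yet your $y_\tau=0$, so the claimed equivalence ``traversal possible iff $y_\tau=1$'' fails. This is an indexing slip rather than a conceptual flaw: with the corrected $y_\tau$, your argument goes through.
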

\begin{proof}
We proceed by induction. We start with the depth-0 \textsc{nand}-tree. 
The depth-0 \textsc{nand}-tree is simply the identity function. The graph $G_0(x)$
associated with this tree consists of only the vertices $s$ and $t$: if
$x=1$, there is an edge between $s$ and $t$, and if $x=0$, there
is no edge between $s$ and $t$.
Thus $s$ and $t$ are connected in $G_0(x)$ if and only if the depth-0 \textsc{nand}-tree evaluates to 1
on~$x$.

\begin{figure}[h]
\centering
\begin{tikzpicture}
\tikzstyle{operator} = [rectangle,rounded corners,draw,fill=white,minimum size=1.5em]
\tikzstyle{vertex}=[circle,fill=black,draw,scale=.5]
\node at (1.5,1.5) {$s$};
\node[vertex] (s) at (1.5,1) {};
\node[operator] (n1) at (0,0) {$G_d(x^{00})$};
\node[operator] (n2) at (3,0) {$G_d(x^{10})$};
\node[operator] (n3) at (0,-2) {$G_d(x^{01})$};
\node[operator] (n4) at (3,-2) {$G_d(x^{11})$};
\node[vertex] (t) at (1.5,-3) {};
\node at (1.5,-3.5) {$t$};
\node[vertex] (left) at (-1.5,-1) {};
\node at (-2,-1) {$l$};
\node[vertex] (right) at (4.5,-1) {};
\node at (5,-1) {$r$};
\path (s) edge (n1);
\path (left) edge (n1);
\path (s) edge (n2);
\path (right) edge (n2);
\path (t) edge (n3);
\path (left) edge (n3);
\path (t) edge (n4);
\path (right) edge (n4);
\end{tikzpicture}
\caption{In the above graph we identify $l$ with the original node $t$ of $G_d(x^{00})$,
$l$ with the original node $s$ of $G_d(x^{01})$,
$r$ with the original node $t$ of $G_d(x^{10})$, and
$r$ with the original node $s$ of $G_d(x^{11})$.}
\label{fig:subgraphs}
\end{figure}
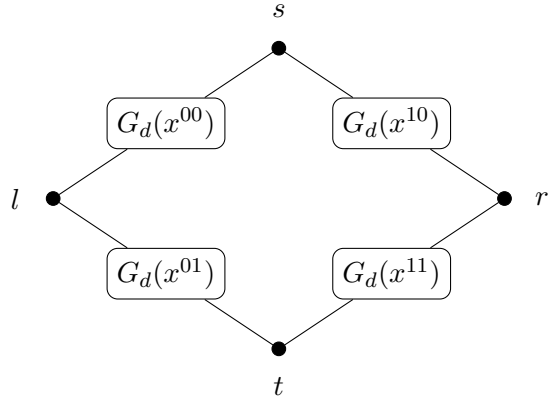

For the induction step, we suppose that the hypothesis is true for 
depth-$2d$ \textsc{nand}-trees, and we would like to prove it is true for
depth-$2(d+1)$ \textsc{nand}-trees. Now the function $\textsc{nand}_{2(d+1)}$ evaluates to true if and only
if 
\begin{align}
\left(\textsc{nand}_{2d}\left(x^{00}\right)=1\wedge
\textsc{nand}_{2d}\left(x^{01}\right)=1\right)\bigvee
\left(\textsc{nand}_{2d}\left(x^{10}\right)=1\wedge
\textsc{nand}_{2d}\left(x^{11}\right)=1\right)=1,
\end{align}
where $x^{\tau}$
indicates all bits of $x$ whose labeling string begins with $\tau$.
But if any of these 
$\textsc{nand}_{2d}\left(x^\tau\right)=1$, by the induction assumption, there must be a
connective path between the nodes at either end of the subgraph
$G_d(x^{\tau})$ in Figure \ref{fig:subgraphs}. This implies that $s$ and
$t$ are connected in $G_{d+1}(x)$ if and only if 
$\textsc{nand}_{2(d+1)}(x)=1$. 
\end{proof}

\paragraph{Dual Graphs} We now define a class of graphs that relate \emph{$0$-instances} of \textsc{nand}-trees to 
$st$-connectivity problems. Define $G_0'$ as an edge with the endpoints labelled $s'$ and $t'$. For $d\geq 1$, define $G_d'$ recursively by replacing every edge in $G_{d-1}'$ with the multigraph consisting of two 2-paths with the same three vertices. The first three such graphs are shown in Figure \ref{fig:dual-nand-graphs}. 

\begin{figure}[h]
\centering
\begin{tikzpicture}

\node at (0,0) {\begin{tikzpicture}
\filldraw (0,0) circle (.1);
\filldraw (1,0) circle (.1);
\draw (0,0) -- (1,0);
\node at (-.25,0) {$s'$};
\node at (1.25,0) {$t'$};
\end{tikzpicture}
};

\node at (3,0) {
\begin{tikzpicture}
\node at (-.25,0) {$s'$};
\filldraw (0,0) circle (.1);
\filldraw (1,0) circle (.1);
\filldraw (2,0) circle (.1);
\node at (2.25,0) {$t'$};

\draw plot [smooth] coordinates {(0,0)  (.5,.25)  (1,0)};
\draw plot [smooth] coordinates {(0,0)  (.5,-.25)  (1,0)};
\draw plot [smooth] coordinates {(2,0)  (1.5,.25)  (1,0)};
\draw plot [smooth] coordinates {(2,0)  (1.5,-.25)  (1,0)};
\end{tikzpicture}
};

\node at (7,0) {
\begin{tikzpicture}
\filldraw (0,0) circle (.1);
\filldraw (1.5,0) circle (.1);
\filldraw (3,0) circle (.1);
\filldraw (.75,.75) circle (.1);
\filldraw (.75,-.75) circle (.1);
\filldraw (2.25,.75) circle (.1);
\filldraw (2.25,-.75) circle (.1);

\draw plot [smooth] coordinates {(0,0) (.25,.5) (.75,.75)};
\draw plot [smooth] coordinates {(3,0) (2.75,.5) (2.25,.75)};
\draw plot [smooth] coordinates {(0,0) (.25,-.5) (.75,-.75)};
\draw plot [smooth] coordinates {(3,0) (2.75,-.5) (2.25,-.75)};

\draw plot [smooth] coordinates {(1.5,0) (1.25,.5) (.75,.75)};
\draw plot [smooth] coordinates {(1.5,0) (1.25,-.5) (.75,-.75)};
\draw plot [smooth] coordinates {(1.5,0) (1.75,-.5) (2.25,-.75)};
\draw plot [smooth] coordinates {(1.5,0) (1.75,.5) (2.25,.75)};

\draw plot [smooth] coordinates {(0,0) (.5,.25) (.75,.75)};
\draw plot [smooth] coordinates {(3,0) (2.5,.25) (2.25,.75)};
\draw plot [smooth] coordinates {(0,0) (.5,-.25) (.75,-.75)};
\draw plot [smooth] coordinates {(3,0) (2.5,-.25) (2.25,-.75)};

\draw plot [smooth] coordinates {(1.5,0) (1,.25) (.75,.75)};
\draw plot [smooth] coordinates {(1.5,0) (1,-.25) (.75,-.75)};
\draw plot [smooth] coordinates {(1.5,0) (2,-.25) (2.25,-.75)};
\draw plot [smooth] coordinates {(1.5,0) (2,.25) (2.25,.75)};

\node at (-.25,0) {$s'$};
\node at (3.25,0) {$t'$};
\end{tikzpicture}
};

\node at (0,-.5) {$G_0'$};
\node at (3,-.75) {$G_1'$};
\node at (7,-1.25) {$G_2'$};

\end{tikzpicture}
\caption{The graphs $G'_0$, $G'_1$ and $G'_2$.}\label{fig:dual-nand-graphs}
\end{figure}
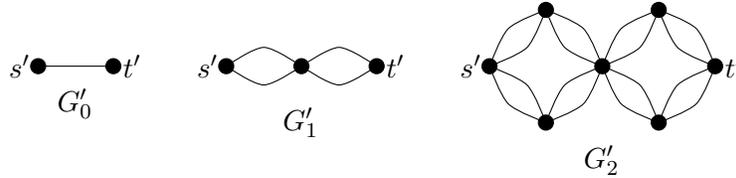

Note that for any $d$, both $G_d$ and $G_d'$ are planar. 
Define $\bar{G}_d$ as a planar embedding of $G_d$ with an additional edge $\{s,t\}$. Define $\bar{G}_d'$ as a planar embedding of $G_d'$ with an additional edge $\{s',t'\}$. Then we have the following.

\begin{theorem}
$\bar{G}_d'=\bar{G}_d^\dagger$, the dual graph of $\bar{G}_d$.
\end{theorem}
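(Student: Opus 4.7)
The plan is to prove this by induction on $d$, strengthening the claim slightly: I will maintain the inductive hypothesis that there is a planar isomorphism $\bar{G}_d^\dagger \cong \bar{G}_d'$ under which the dual of the added edge $\{s,t\}$ corresponds to the added edge $\{s',t'\}$. For the base case $d=0$, both $\bar{G}_0$ and $\bar{G}_0'$ are the $2$-vertex multigraph with two parallel edges, and a direct check shows that taking the dual of $\bar{G}_0$ yields the same multigraph, with the dual of the added edge playing the role of $\{s',t'\}$.

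The key observation driving the inductive step is that the two edge-replacement operations used to define the families $\{G_d\}$ and $\{G_d'\}$ are dual to each other. Concretely, suppose $H$ is a planar graph with an edge $e=\{u,v\}$ separating faces $f_1$ and $f_2$, and let $H'$ be the result of replacing $e$ by a $4$-cycle $u\text{-}m_1\text{-}v\text{-}m_2\text{-}u$, with $m_1$ embedded in $f_1$ and $m_2$ embedded in $f_2$. A direct face count shows that $H'$ has the two modified faces $f_1', f_2'$ together with a new face $f_{\text{new}}$ bounded by the $4$-cycle, and the four new primal edges give rise in $(H')^\dagger$ to two dual edges from $f_1^\dagger$ to $f_{\text{new}}^\dagger$ and two from $f_2^\dagger$ to $f_{\text{new}}^\dagger$---precisely the result of replacing $e^\dagger$ in $H^\dagger$ by the $3$-vertex multigraph used in constructing $G_{d+1}'$ from $G_d'$. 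Hence the primal and primed edge-replacement operations commute with duality, edge by edge.

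For the inductive step, I observe that $\bar{G}_{d+1}$ is obtained from $\bar{G}_d$ by replacing each edge of $G_d$---equivalently, every edge of $\bar{G}_d$ except the added edge $\{s,t\}$---with a $4$-cycle. Applying the observation above to each such edge and invoking the inductive hypothesis, $\bar{G}_{d+1}^\dagger$ equals $\bar{G}_d^\dagger$ with every edge except the dual of $\{s,t\}$ replaced by the $3$-vertex multigraph, which by the hypothesis equals $\bar{G}_d'$ with every edge except $\{s',t'\}$ replaced, i.e., $\bar{G}_{d+1}'$. The edge $\{s,t\}$ and its dual remain untouched, so the refined hypothesis persists at level $d+1$. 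The main subtlety to verify carefully is that the planar embeddings implicit in the recursive definitions of $G_d$ and $G_d'$ from Figures \ref{fig:first3} and \ref{fig:dual-nand-graphs} are consistent with the hypothesis---specifically, that in each replacement step the two new vertices of each $4$-cycle are embedded in the two distinct faces incident to the edge being replaced. This holds because $\bar{G}_d$ is $2$-edge-connected (it is built entirely from cycles), so every edge separates two distinct faces, and the natural embedding places one new vertex in each.
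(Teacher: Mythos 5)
Your proposal is correct and follows essentially the same route as the paper: induction on $d$, with the base case $\bar{G}_0$ being self-dual, the edges $\{s,t\}$ and $\{s',t'\}$ tracked as duals throughout, and the inductive step reduced to the local observation that replacing a primal edge by a $4$-cycle corresponds, in the dual, to replacing the dual edge by the doubled $2$-path (the paper verifies this via its Figure \ref{fig:dual-proof} identification of the two new vertices/one new face with the two new faces/one new vertex, where you give an explicit face count). Your added remark about the embedding placing one new vertex in each incident face is a reasonable way to pin down the implicit embedding the paper leaves to its figures.
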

\begin{proof}
We will prove the theorem by induction on $d$. For $d=0$, both $\bar{G}_0$ and $\bar{G}_0'$ are just two edges with the same endpoints, and this graph is self-dual.

Note that for all $d\geq 1$, $\bar{G}_d$ has an edge $\{s,t\}$, and $\bar{G}_d'$ has an edge $\{s',t'\}$. These edges will always be dual. 
Suppose $\bar{G}_{d-1}'=\bar{G}_{d-1}^\dagger$. Fix some edge $e=\{u,v\}\in E(\bar{G}_{d-1})$ other than $\{s,t\}$. To get to $\bar{G}_d$, this edge is replaced by 4 edges, $e_0,e_1,e_2,e_3$ as in Figure \ref{fig:dual-proof} (a).

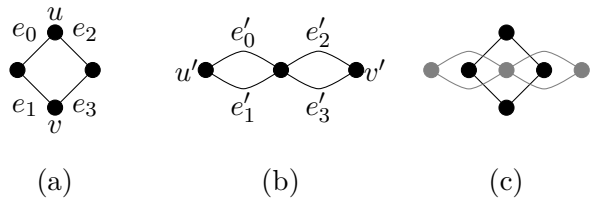
\begin{figure}
\centering
\begin{tikzpicture}
\node at (0,0) {
\begin{tikzpicture}
\filldraw (0,0) circle (.1);
\filldraw (.5,.5) circle (.1);
\filldraw (.5,-.5) circle (.1);
\filldraw (1,0) circle (.1);
\draw (0,0)--(.5,.5)--(1,0)--(.5,-.5)--(0,0);

\node at (.1,.5) {$e_0$};
\node at (.1,-.5) {$e_1$};
\node at (.9,.5) {$e_2$};
\node at (.9,-.5) {$e_3$};

\node at (.5,.75) {$u$};
\node at (.5,-.75) {$v$};
\end{tikzpicture}
};

\node at (3,0) {
\begin{tikzpicture}
\node at (-.25,0) {$u'$};
\filldraw (0,0) circle (.1);
\filldraw (1,0) circle (.1);
\filldraw (2,0) circle (.1);
\node at (2.25,0) {$v'$};

\draw plot [smooth] coordinates {(0,0)  (.5,.25)  (1,0)};
\draw plot [smooth] coordinates {(0,0)  (.5,-.25)  (1,0)};
\draw plot [smooth] coordinates {(2,0)  (1.5,.25)  (1,0)};
\draw plot [smooth] coordinates {(2,0)  (1.5,-.25)  (1,0)};

\node at (.5,.5) {$e_0'$};
\node at (.5,-.5) {$e_1'$};
\node at (1.5,.5) {$e_2'$};
\node at (1.5,-.5) {$e_3'$};
\end{tikzpicture}
};

\node at (6,0) {
\begin{tikzpicture}
\filldraw (.5,0) circle (.1);
\filldraw (1,.5) circle (.1);
\filldraw (1,-.5) circle (.1);
\filldraw (1.5,0) circle (.1);
\draw (.5,0)--(1,.5)--(1.5,0)--(1,-.5)--(.5,0);

\filldraw[gray] (0,0) circle (.1);
\filldraw[gray] (1,0) circle (.1);
\filldraw[gray] (2,0) circle (.1);

\draw[gray] plot [smooth] coordinates {(0,0)  (.5,.25)  (1,0)};
\draw[gray] plot [smooth] coordinates {(0,0)  (.5,-.25)  (1,0)};
\draw[gray] plot [smooth] coordinates {(2,0)  (1.5,.25)  (1,0)};
\draw[gray] plot [smooth] coordinates {(2,0)  (1.5,-.25)  (1,0)};
\end{tikzpicture}
};

\node at (0,-1.5) {(a)};
\node at (3,-1.5) {(b)};
\node at (6,-1.5) {(c)};
\end{tikzpicture}
\caption{(a) shows a subgraph of $G_d$ corresponding to an edge $e=\{u,v\}$ of $G_{d-1}$. (b) shows the subgraph of $G_d'$ corresponding to the dual edge of $e$, $e^\dagger = \{u',v'\}$, in $G_{d-1}'$, $(u',v')$. (c) shows how the two subgraphs are dual.}\label{fig:dual-proof}
\end{figure}

By the induction hypothesis, $e$ has a dual edge $e^\dagger$ in $E(\bar{G}_{d-1}')$. To obtain $\bar{G}_{d}'$ from $\bar{G}_{d-1}'$, we replace $e^\dagger=\{u',v'\}$ with 4 edges, $e_0',e_1',e_2',e_3'$ as in Figure \ref{fig:dual-proof} (b).
Replacing $e$ has introduced two new vertices, and one new face, whereas replacing $e^\dagger$ has introduced two new faces and one new vertex. We identify them as in Figure \ref{fig:dual-proof} (c).
We can thus see that $e_b'=e_b^\dagger$ for all $b\in\{0,1,2,3\}$.
\end{proof}
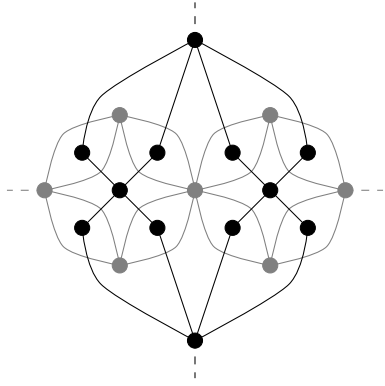
\begin{figure}[h!]
\centering
\begin{tikzpicture}

\filldraw (0,2) circle (.1);

\filldraw[gray] (-1,1) circle (.1);

\filldraw (-1.5,.5) circle (.1);
\filldraw (-.5,.5) circle (.1);

\filldraw[gray] (-2,0) circle (.1);
\filldraw (-1,0) circle (.1);
\filldraw[gray] (0,0) circle (.1);

\filldraw[gray] (-1,-1) circle (.1);

\filldraw (-1.5,-.5) circle (.1);
\filldraw (-.5,-.5) circle (.1);

\filldraw (0,-2) circle (.1);

\filldraw[gray] (1,1) circle (.1);

\filldraw (1.5,.5) circle (.1);
\filldraw (.5,.5) circle (.1);

\filldraw[gray] (2,0) circle (.1);
\filldraw (1,0) circle (.1);

\filldraw[gray] (1,-1) circle (.1);

\filldraw (1.5,-.5) circle (.1);
\filldraw (.5,-.5) circle (.1);

\draw[gray] plot [smooth] coordinates {(-2,0) (-1.75,.75) (-1,1)};
\draw[gray] plot [smooth] coordinates {(2,0) (1.75,.75) (1,1)};
\draw[gray] plot [smooth] coordinates {(-2,0) (-1.75,-.75) (-1,-1)};
\draw[gray] plot [smooth] coordinates {(2,0) (1.75,-.75) (1,-1)};

\draw[gray] plot [smooth] coordinates {(2,0) (1.25,-.25) (1,-1)};
\draw[gray] plot [smooth] coordinates {(2,0) (1.25,.25) (1,1)};
\draw[gray] plot [smooth] coordinates {(-2,0) (-1.25,-.25) (-1,-1)};
\draw[gray] plot [smooth] coordinates {(-2,0) (-1.25,.25) (-1,1)};

\draw[gray] plot [smooth] coordinates {(0,0) (-.25,.75) (-1,1)};
\draw[gray] plot [smooth] coordinates {(0,0) (.25,.75) (1,1)};
\draw[gray] plot [smooth] coordinates {(0,0) (-.25,-.75) (-1,-1)};
\draw[gray] plot [smooth] coordinates {(0,0) (.25,-.75) (1,-1)};

\draw[gray] plot [smooth] coordinates {(0,0) (-.75,.25) (-1,1)};
\draw[gray] plot [smooth] coordinates {(0,0) (.75,.25) (1,1)};
\draw[gray] plot [smooth] coordinates {(0,0) (-.75,-.25) (-1,-1)};
\draw[gray] plot [smooth] coordinates {(0,0) (.75,-.25) (1,-1)};

\draw (-1.5,.5)--(-1,0);
\draw (-1,0) -- (-.5,.5);
\draw (-1,0) -- (-.5,-.5);
\draw (-1,0) -- (-1.5,-.5);

\draw (1.5,.5)--(1,0);
\draw (1,0) -- (.5,.5);
\draw (1,0) -- (.5,-.5);
\draw (1,0) -- (1.5,-.5);

\draw (0,2)--(.5,.5);
\draw (0,2) -- (-.5,.5);
\draw (0,-2) -- (-.5,-.5);
\draw (0,-2) -- (.5,-.5);

\draw plot [smooth] coordinates {(0,2) (-1.25,1.25) (-1.5,.5)}; 
\draw plot [smooth] coordinates {(0,-2) (-1.25,-1.25) (-1.5,-.5)}; 
\draw plot [smooth] coordinates {(0,2) (1.25,1.25) (1.5,.5)}; 
\draw plot [smooth] coordinates {(0,-2) (1.25,-1.25) (1.5,-.5)}; 

\draw[dashed] (0,2)--(0,2.5);
\draw[dashed] (0,-2)--(0,-2.5);

\draw[gray,dashed] (-2,0) -- (-2.5,0);
\draw[gray,dashed] (2,0) -- (2.5,0);
\end{tikzpicture}
\caption{$\bar{G}_2$ and its dual $\bar{G}_2'$.}\label{fig:G2_dual}
\end{figure}
\noindent In Figure \ref{fig:G2_dual}, we show how $\bar{G}_d$ and $\bar{G}'_d$ are dual for
for $d=2.$

Recall that each edge $e\in E(G_d)$ corresponds to some input variable
$x_e$. Thus every edge $e\in E(G_d')$ corresponds to an input variable
$x_{e^\dagger}$. For any instance $x\in\{0,1\}^{2^{2d}}$, we define a
subgraph of $G_d'$, $G_d'(x)$, by including only those edges of $G_d'$
for which the corresponding variable $x_{e^\dagger}$ is 0. Just as
$\textsc{nand}_{2d}(x)=1$ if and only if $s$ and $t$ are connected in
$G_d(x)$, we now show $\textsc{nand}_{2d}({x})=0$ if and only if $s'$ and $t'$ are
connected in $G'_d(x)$.

\begin{lemma}
For any $x\in\{0,1\}^{2^{2d}}$, $\textsc{nand}_{2d}({x})=0$ if and only if $s'$ and $t'$ are connected in $G'_d(x)$.
\end{lemma}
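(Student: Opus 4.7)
The plan is to combine Lemma~\ref{lem:nand-graph} with the preceding theorem $\bar{G}_d'=\bar{G}_d^\dagger$ via classical planar cycle--cut duality. By Lemma~\ref{lem:nand-graph}, $\textsc{nand}_{2d}(x)=0$ is equivalent to $s$ and $t$ being disconnected in $G_d(x)$, so it suffices to show that $s,t$ are disconnected in $G_d(x)$ if and only if $s',t'$ are connected in $G_d'(x)$.

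First I would set up the edge-level correspondence. An edge $e\in E(G_d)$ lies in $G_d(x)$ if and only if $x_e=1$, whereas its dual $e^\dagger\in E(G_d')$ lies in $G_d'(x)$ if and only if $x_e=0$. Combined with the fact that $\{s,t\}$ and $\{s',t'\}$ are dual edges in $\bar{G}_d$ and $\bar{G}_d'$, this means that the subset $F:=E(G_d(x))\cup\{\{s,t\}\}\subseteq E(\bar{G}_d)$ has dual edge-set $F^\dagger = E(\bar{G}_d')\setminus E(G_d'(x))$.

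Next I would invoke planar cycle--cut duality applied to the connected planar graph $\bar{G}_d$ with dual $\bar{G}_d'$: a set of primal edges contains a simple cycle through $\{s,t\}$ if and only if the corresponding set of dual edges contains a bond separating $s'$ from $t'$. Since $s,t$ are connected in $G_d(x)$ exactly when $F$ contains a cycle through $\{s,t\}$ (any $st$-path in $G_d(x)$ closed by the added edge $\{s,t\}$), this duality yields that $s,t$ are connected in $G_d(x)$ if and only if $F^\dagger$ contains an $s't'$-cut in $\bar{G}_d'$, which in turn holds if and only if removing $F^\dagger = E(\bar{G}_d')\setminus E(G_d'(x))$ from $\bar{G}_d'$ disconnects $s'$ from $t'$, i.e., if and only if $s',t'$ are disconnected in $G_d'(x)$. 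Negating this chain of equivalences and combining with Lemma~\ref{lem:nand-graph} delivers the claim.

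The main obstacle is simply keeping the complementary conventions straight---an input bit $x_e=1$ \emph{adds} the primal edge $e$ to $G_d(x)$ but \emph{removes} the dual edge $e^\dagger$ from $G_d'(x)$---and aligning this complementation with the correspondence between cycles through $\{s,t\}$ and bonds through $\{s',t'\}$. Once that bookkeeping is written down, the standard planar duality theorem finishes the argument with no further computation.
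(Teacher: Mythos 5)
Your proof is correct, but it takes a genuinely different route from the paper's. The paper does not use the duality theorem $\bar{G}_d'=\bar{G}_d^\dagger$ at all here: it applies De Morgan's law to rewrite $\textsc{nand}_{2d}(x)=0$ as $\overline{\textsc{nand}}_{2d}(\bar x)=1$ for the complemented formula and complemented input, and then runs an induction on the recursive (series--parallel) structure of $G_d'$ that exactly mirrors the induction proving Lemma~\ref{lem:nand-graph}; the argument is only sketched because it is a near verbatim repetition. You instead deduce the statement from Lemma~\ref{lem:nand-graph} together with the theorem $\bar{G}_d'=\bar{G}_d^\dagger$ and the classical cycle--bond duality for planar (multi)graphs: an $st$-path in $G_d(x)$ closed up by the extra edge $\{s,t\}$ is a cycle whose dual is a bond through $\{s',t'\}$ consisting of edges absent from $G_d'(x)$, and conversely. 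Your bookkeeping of the complementary conventions ($x_e=1$ puts $e$ in $G_d(x)$ and removes $e^\dagger$ from $G_d'(x)$, and $\{s,t\}^\dagger=\{s',t'\}$) is exactly right, and the remaining steps (a superset of a bond still separates; a minimal $s't'$-separating edge set in a connected planar multigraph is a bond) are standard, though you are implicitly invoking them and would want to cite or state the cycle--cut duality precisely. What each approach buys: the paper's induction is elementary and self-contained, needing neither planarity nor the duality theorem, and stays parallel to the proof of Lemma~\ref{lem:nand-graph}; your argument avoids a second induction, makes the $0$-instance/cut correspondence conceptually explicit, and in fact foreshadows the path--cut duality that the paper later exploits in the negative-witness analysis of Lemma~\ref{lemma:negative_witness}.
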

\begin{proof}
The proof is very similar to the proof of Lemma \ref{lem:nand-graph}, so we merely sketch the argument. Define $\overline{\textsc{nand}}_{2d}$ as the formula on a full binary tree of depth $2d$ in which we label vertices at even distance from the leaves by $\wedge$, and those at odd distance from the leaves by $\vee$ (the opposite of a \textsc{nand}-tree). 
First, we have by De Morgan's Law, $\textsc{nand}_{2d}({x})=0$ if and only if $\overline{\textsc{nand}}_{2d}(\bar{x})=1$, where $\bar{x}$ is the entrywise complement of $x$. 
A simple inductive proof shows that $\bar{x}$ is a 1-instance of $\overline{\textsc{nand}}_{2d}$ if and only if $s'$ and $t'$ are connected in $G_d'(x)$. 
\end{proof}

\paragraph{Connection to Average Choice Complexity}
We have described a connection between evaluating \textsc{nand}-trees and solving $st$-connectivity problems. We will now connect the effective resistance of these graphs to average choice complexity, defined in Section \ref{sec:average-choice-complexity}. 
\begin{lemma}\label{lemma:kfault_conn}
For any instance of \textsc{nand}-tree, $x\in\{0,1\}^{2^{2d}}$, ${\cal C}_A(x)=R_{s,t}(G_d(x))$ and ${\cal C}_B(x)=R_{s',t'}(G_d'(x))$. Furthermore, for all $x$, ${\cal C}(x)\leq {\cal F}(x)$, where ${\cal F}$ is the fault complexity. 
\end{lemma}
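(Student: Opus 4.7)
My plan is to prove each of the three assertions in turn by induction on $d$, the base cases $d=0$ being immediate. For $\mathcal{C}_A(x) = R_{s,t}(G_d(x))$, the inductive step exploits the decomposition in Figure~\ref{fig:subgraphs}: the four subgraphs $G_d(x^{ab})$ form two series paths (through $l$ and through $r$) in parallel between $s$ and $t$, so series-parallel reductions give $R_{s,t}(G_{d+1}(x)) = S_0 S_1/(S_0+S_1)$ with $S_b = R_{s,t}(G_d(x^{b0})) + R_{s,t}(G_d(x^{b1}))$. Unfolding the definition of $\mathcal{C}_A$ at the root $v$ (an $A$-turn) and pulling out $c(v)$ yields $\mathcal{C}_A(x) = c(v)\cdot \tfrac{1}{2}\min(S_0,S_1)$, since Player $A$ picks $b$ to minimize and Player $B$'s choice is averaged uniformly. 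Substituting $c(v)=2\max(S_0,S_1)/(S_0+S_1)$ (or $c(v)=2$ with one $S_b=\infty$ in the fault case) gives the same series-parallel expression, and the inductive hypothesis closes the step.

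For $\mathcal{C}_B(x) = R_{s',t'}(G_d'(x))$, I would first verify that $G_{d+1}'(x)$ decomposes into two parallel copies $G_d'(x^{00}), G_d'(x^{01})$ between $s'$ and the middle vertex $m$ in series with two parallel copies $G_d'(x^{10}), G_d'(x^{11})$ between $m$ and $t'$; this follows from the recursive definition of $G_{d+1}'$ as $G_1'$ with each of its four edges replaced by $G_d'$, together with the primal-dual edge labelling induced by Figure~\ref{fig:string_label}. Series-parallel reductions give $R_{s',t'}(G_{d+1}'(x)) = R_{00}R_{01}/(R_{00}+R_{01}) + R_{10}R_{11}/(R_{10}+R_{11})$ with $R_{ab} = R_{s',t'}(G_d'(x^{ab}))$. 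On the $\mathcal{C}_B$ side, conditioning on Player $A$'s uniform choice at the root and Player $B$'s deterministic response $b'_b$ at $v_b$ yields $\mathcal{C}_B(x) = \tfrac{1}{2}[c(v_0)T_0 + c(v_1)T_1]$, where $T_b = \min_{b'}\mathcal{C}_B(x^{bb'})$. The key identity to establish is $c(v_b)T_b = 2R_{b0}R_{b1}/(R_{b0}+R_{b1})$ with $R_{bb'} = \mathcal{C}_B(x^{bb'})$. This follows by applying the same pull-out at the grandchildren of $v_b$ (depth-$(2d-1)$ $B$-turn subtrees) to rewrite each $\mathcal{C}_B(X^{bb'b''})$ appearing in the definition of $c(v_b)$ as $c(v_{bb'b''})T_{bb'b''}$, then using the analogous conditioning one level down to obtain $\mathcal{C}_B(X^{bb'0}) + \mathcal{C}_B(X^{bb'1}) = 2R_{bb'}$; plugging these into $c(v_b)=\max/\mathrm{avg}$ and multiplying by $T_b = \min(R_{b0},R_{b1})$ yields the claimed identity, so the induction closes.

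For $\mathcal{C}(x) \leq \mathcal{F}(x)$, I would induct on $d$ using the recursions above together with the fault-complexity recursion $\mathcal{F}_A(x) = 2\max(\mathcal{F}_A(x^{00}), \mathcal{F}_A(x^{01}))$ at a fault (with $v_0$ winnable) and $\mathcal{F}_A(x) = \max_{a,b}\mathcal{F}_A(x^{ab})$ otherwise. The fault case is immediate: $\mathcal{C}_A(x) = \mathcal{C}_A(x^{00}) + \mathcal{C}_A(x^{01}) \leq \mathcal{F}_A(x^{00}) + \mathcal{F}_A(x^{01}) \leq 2\max = \mathcal{F}_A(x)$ by the inductive hypothesis. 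In the non-fault case, setting $M=\mathcal{F}_A(x)$, the hypothesis gives $\mathcal{C}_A(x^{ab}) \leq M$ and hence $S_b \leq 2M$; since $(a,b)\mapsto ab/(a+b)$ is monotone increasing in each argument, $\mathcal{C}_A(x) = S_0 S_1/(S_0+S_1) \leq (2M)^2/(4M) = M$. The case where $x$ is $B$-winnable is symmetric. I expect the main obstacle to be justifying the recursive decomposition of $G_{d+1}'(x)$ into correctly labelled copies of $G_d'$, which requires tracking the primal-dual edge correspondence across all levels of the recursion to identify which sub-instance $x^{ab}$ corresponds to which of the four copies; once the decomposition is in place, the remaining algebraic manipulations for all three assertions are routine.
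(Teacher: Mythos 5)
Your proposal is correct and follows essentially the same route as the paper: induction on $d$, using the recursion ${\cal C}_A(x)=\tfrac12 c(r)\min_b\{{\cal C}_A(x^{b0})+{\cal C}_A(x^{b1})\}$ together with the series–parallel structure of Figure~\ref{fig:subgraphs} (and its dual) to match effective resistances, and the same fault/non-fault case split for ${\cal C}(x)\leq{\cal F}(x)$. In fact, for the ${\cal C}_B$ half you supply more detail (the series-of-parallels dual decomposition and the identity $c(v_b)T_b=2R_{b0}R_{b1}/(R_{b0}+R_{b1})$) than the paper, which simply asserts that the ${\cal C}_B$ analysis is similar; your flagged concern about tracking the primal–dual labelling is legitimate but resolves exactly as you state, with the dual-parallel pairs being the primal-series pairs $\{x^{b0},x^{b1}\}$.
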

\begin{proof}
We will give a proof for ${\cal C}_A$, as the case of ${\cal C}_B$ is similar. 

First, $R_{s,t}(G_d(x))=\infty$ if and only if $s$ and $t$ are not connected in $G_d(x)$, if and only if $x$ is a 0-instance, if and only if ${\cal C}_A(x)=\infty$. Thus, suppose this is not the case, so ${\cal C}(x)={\cal C}_A(x)<\infty$.

The rest of the proof is by induction. For the case of $d=0$, we have to consider the only $A$-winnable input in $\{0,1\}^{2^0}$, $x=1$. In that case, we have ${\cal C}_A(x)=1$, and since $G_0(x)$ is just a single edge from $s$ to $t$, $R_{s,t}(G_0(x))=1$. We also have ${\cal F}(x)=1$, since there are no choices, so there are no faults.

Let $x\in\{0,1\}^{2^{2(d+1)}}$ be any $A$-winnable input. Let
$\mathscr{S}_A(x)$ be the set of all $A$-winning strategies on input $x$, 
and let $\cal S$ be a strategy in $\mathscr{S}_A(x)$. Let $b\in\{0,1\}$
be the choice of $\cal S$ at the root. Then
$\cal S$ can be
described recursively by $b$ and a pair
of winning sub-strategies, ${\cal S}^0\in \mathscr{S}_A({x^{b0}})$ and ${\cal
S}^1\in\mathscr{S}_A({x^{b1}})$, one for each of the possible first
choices of Player $B$.  If $r$ is the root, we have:
\begin{eqnarray}
{\cal C}_A(x) &=& \min_{{\cal S}\in\mathscr{S}_A(x)}\mathbb{E}_{p\in {\cal S}}\left[\prod_{v\in\nu_A(p)}c(v)\right]\notag\\
&=& \min_{b\in\{0,1\}}\min_{{\cal S}^0\in\mathscr{S}_A(x^{b0}),{\cal S}^1\in\mathscr{S}_A(x^{b1})}\frac{1}{2}\(\mathbb{E}_{p\in {\cal S}^0}\left[c(r)\prod_{v\in\nu_A(p)}c(v)\right]
+\mathbb{E}_{p\in {\cal S}^1}\left[c(r)\prod_{v\in\nu_A(p)}c(v)\right]\)\notag\\
&=& \frac{1}{2}c(r)\min_{b\in\{0,1\}}\{{\cal C}_A(x^{b0})+{\cal C}_A(x^{b1})\}.\label{eq:CR}
\end{eqnarray}
We first consider the case that $r$ is a fault, so $c(r)=2$. We first show ${\cal C}_A(x)\leq {\cal F}(x)$.
Using the induction hypothesis,
\begin{align}
{\cal C}_A(x)\leq \min_{b\in\{0,1\}}\{{\cal F}_A(x^{b0})+{\cal F}_A(x^{b1})\}
\leq \min_{b}\max_{b'}2{\cal F}_A(x^{bb'})= {\cal F}(x).
\end{align}
Also by the induction hypothesis:
\begin{equation}
{\cal C}_A(x)=\min_{b\in\{0,1\}}\{R_{s,t}(G_{d}(x^{b0}))+R_{s,t}(G_{d}(x^{b1}))\}.
\end{equation}
Since $r$ is a fault, exactly one of ${\cal C}_A(x^{b0})+{\cal
C}_A(x^{b1})$ for $b\in\{0,1\}$ is finite. Without loss of generality,
suppose it is $b=0$. Then
$R_{s,t}(G_{d}(x^{10}))+R_{s,t}(G_{d}(x^{11}))=\infty$, meaning
that $s$ and $t$ are not connected in one of $G_{d}(x^{10})$ and
$G_{d}(x^{11})$. Referring to Figure \ref{fig:subgraphs}, which shows
how $G_{d+1}(x)$ is be constructed from $G_d(x^{00})$, $G_d(x^{01})$,
$G_d(x^{10})$ and $G_d(x^{11})$, since resistances in
series add, we have $R_{s,t}(G_{d+1}(x))=R_{s,t}(G_{d}(x^{00}))+R_{s,t
}(G_{d}(x^{01}))={\cal C}_A(x)$, as desired.

Now suppose that $r$ is not a fault. Continuing from \eqref{eq:CR}, we have:
\begin{eqnarray}\label{eq:C_nofault}
{\cal C}_A(x)&=& \frac{1}{2}\frac{\max\{{\cal C}(x^{00})+{\cal C}(x^{01}),{\cal C}(x^{10})+{\cal C}(x^{11})\}}{\mathrm{avg}\{{\cal C}(x^{00})+{\cal C}(x^{01}),{\cal C}(x^{10})+{\cal C}(x^{11})\}}\min_{b\in \{0,1\}}\left\{{\cal C}(x^{b0})+{\cal C}(x^{b1})\right\}\notag\\
&=& \frac{({\cal C}(x^{00})+{\cal C}(x^{01}))({\cal C}(x^{10})+{\cal C}(x^{11}))}{{\cal C}(x^{00})+{\cal C}(x^{01})+{\cal C}(x^{10})+{\cal C}(x^{11})}.\label{eq:C}
\end{eqnarray}
Since $v$ is not a fault, we have ${\cal F}(x)=\max_{b,b'\in\{0,1\}}{\cal F}(x^{bb'})$. Using \eqref{eq:C_nofault}
and the induction hypothesis, we get:
\begin{equation}
{\cal C}_A(x)\leq \frac{1}{2}\max_{b\in\{0,1\}}\{{\cal C}_A(x^{b0})+{\cal C}_A(x^{b1})\}\leq \frac{1}{2}\max_{b\in\{0,1\}}\{{\cal F}(x^{b0})+{\cal F}(x^{b1})\}\leq {\cal F}(x).
\end{equation}
By the induction hypothesis and \eqref{eq:C}, we get:
\begin{equation}
{\cal C}_A(x) = \frac{(R_{s,t}(G_d(x^{00}))+R_{s,t}(G_d(x^{01})))(R_{s,t}(G_d(x^{10}))+R_{s,t}(G_d(x^{11})))}{R_{s,t}(G_d(x^{00}))+R_{s,t}(G_d(x^{01}))+R_{s,t}(G_d(x^{10}))+R_{s,t}(G_d(x^{11}))}.
\end{equation}
To complete the proof, we refer to Figure \ref{fig:subgraphs}, which shows how $G_{d+1}(x)$ can be constructed from $G_d(x^{00})$, $G_d(x^{01})$, $G_d(x^{10})$ and $G_d(x^{11})$. Since resistances in series add, and inverse resistances in parallel add, we have:
\begin{eqnarray}
\frac{1}{R_{s,t}(G_{d+1}(x))} &=& \frac{1}{R_{s,t}(G_d(x^{00}))+R_{s,t}(G_d(x^{01}))} +\frac{1}{R_{s,t}(G_d(x^{01}))+R_{s,t}(G_d(x^{11}))}\nonumber\\
&=& \frac{R_{s,t}(G_d(x^{00}))+R_{s,t}(G_d(x^{01}))+R_{s,t}(G_d(x^{00}))+R_{s,t}(G_d(x^{01}))}{{R_{s,t}(G_d(x^{00}))+R_{s,t}(G_d(x^{01}))})({R_{s,t}(G_d(x^{00}))+R_{s,t}(G_d(x^{01}))})},
\end{eqnarray}
so ${\cal C}_A(x)=R_{s,t}(G_{d+1}(x))$. 

A similar analysis for ${\cal C}_B$ completes the proof. 
\end{proof}


\subsection{Span Program for NAND-Trees}\label{sec:span}

We use the relationship between $st$-connectivity on $G_d$ and depth-$2d$ \textsc{nand}-tree evaluation
to define a span program for \textsc{nand}-tree evaluation.
It is nearly identical to a span program for $st$-connectivity given
in \cite{BR12} (see also \cite[Section 4]{IJ15}), except that we
exploit the fact that we are only  considering subgraphs of $G_d$, so
not all edges are possible. We call the following span program
$P_{G_d}$:
\begin{equation*}
H_{e,0}=\{0\},\quad H_{e,1}=\mathrm{span}\{\ket{e}\},\quad H=\mathrm{span}\{\ket{e}:e\in \overrightarrow{E}(G_d)\},\quad V= \mathrm{span}\{\ket{u}:u\in V(G_d)\}
\end{equation*}
\begin{equation*}
\tau=\ket{s}-\ket{t}, \quad A=\sum_{(u,v)\in \overrightarrow{E}(G_d)}(\ket{u}-\ket{v})\bra{u,v}.
\end{equation*}
We have used $e$ in $H_{e,1}$ as a short-hand for the input variable associated with $e$.

\begin{lemma}\label{lemma:pos_witness}
Let $x\in\{0,1\}^{E(G_d)}$ be a 1-instance of \textsc{nand}-tree evaluation, and let $G_d(x)$ be the associated subgraph of $G_d$.
Then $w_+(x,P_{G_d})=\frac{1}{2}R_{s,t}(G_d(x))$.
\end{lemma}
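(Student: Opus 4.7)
}

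The plan is to characterize positive witnesses as flow-like assignments on the directed edges of $G_d(x)$ and then pass between such assignments and unit $st$-flows via antisymmetrization, with the factor of $\frac{1}{2}$ coming from the fact that the span program uses directed edges while unit flows are antisymmetric in their two orientations. Concretely, any $\ket{w}\in H$ has the form $\ket{w}=\sum_{(u,v)\in \overrightarrow{E}(G_d)} w_{uv}\ket{u,v}$, the constraint $\ket{w}\in H(x)$ forces $w_{uv}=0$ whenever $\{u,v\}\notin E(G_d(x))$, and
\[
A\ket{w}=\sum_{v\in V(G_d)}\Bigl(\sum_{u:(v,u)\in\overrightarrow{E}(G_d(x))} w_{vu}-\sum_{u:(u,v)\in\overrightarrow{E}(G_d(x))} w_{uv}\Bigr)\ket{v}.
\]
Thus $A\ket{w}=\tau=\ket{s}-\ket{t}$ translates directly into a Kirchhoff-type condition on the $w_{uv}$.

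For the upper bound $w_+(x)\leq \tfrac{1}{2}R_{s,t}(G_d(x))$, I will take an optimal unit $st$-flow $\theta$ on $G_d(x)$ and set $w_{uv}=\tfrac{1}{2}\theta(u,v)$ for each $(u,v)\in\overrightarrow{E}(G_d(x))$. Antisymmetry $\theta(u,v)=-\theta(v,u)$ together with the flow conservation conditions immediately yields $A\ket{w}=\tau$, while
\[
\norm{\ket{w}}^2=\sum_{(u,v)\in\overrightarrow{E}(G_d(x))}\tfrac{1}{4}\theta(u,v)^2=\tfrac{1}{2}\sum_{\{u,v\}\in E(G_d(x))}\theta(u,v)^2=\tfrac{1}{2}J(\theta),
\]
since each undirected edge is counted twice in the sum over $\overrightarrow{E}$. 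Minimizing over $\theta$ gives the desired bound.

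For the lower bound $w_+(x)\geq \tfrac{1}{2}R_{s,t}(G_d(x))$, the main step is an antisymmetrization: given any positive witness $\ket{w}$, define $\tilde w_{uv}=(w_{uv}-w_{vu})/2$. By linearity of the Kirchhoff condition, $A\ket{\tilde w}=\tau$, and $\ket{\tilde w}\in H(x)$ because $\tilde w_{uv}$ vanishes whenever both $w_{uv}$ and $w_{vu}$ do. The inequality $(a-b)^2\leq 2(a^2+b^2)$ applied to each undirected edge shows $\norm{\ket{\tilde w}}^2\leq \norm{\ket{w}}^2$, so the optimal witness can be taken to be antisymmetric in $u\leftrightarrow v$. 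Setting $\theta(u,v)=2\tilde w_{uv}$ then produces a unit $st$-flow on $G_d(x)$ (the normalization at $s$ and $t$ follows from $A\ket{\tilde w}=\ket{s}-\ket{t}$ together with antisymmetry), whose energy equals $2\norm{\ket{\tilde w}}^2$, giving $R_{s,t}(G_d(x))\leq 2\norm{\ket{\tilde w}}^2\leq 2w_+(x)$.

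The only nonroutine step is the antisymmetrization and verifying that it weakly decreases the witness norm while preserving both $A\ket{w}=\tau$ and membership in $H(x)$; once that is in hand, the correspondence between $\ket{w}$ and unit $st$-flows is a straightforward translation, with the $\tfrac{1}{2}$ bookkeeping coming from summing over directed edges rather than undirected ones.
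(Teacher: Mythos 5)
Your proof is correct, but it takes a different route from the paper. The paper disposes of this lemma in two lines: it observes that the target and the restricted operator $A\Pi_{H(x)}$ of $P_{G_d}$ coincide with those of the generic $st$-connectivity span program $P_K$ of Ito and Jeffery restricted to $G_d(x)$ (so that working with subgraphs of $G_d$ rather than of the complete graph changes nothing for positive inputs), and then cites \cite[Lemma 4.1]{IJ15} for $w_+ = \frac{1}{2}R_{s,t}$. You instead reprove that cited fact from scratch: the identification of positive witnesses with flow-like edge assignments, the substitution $w_{uv}=\frac{1}{2}\theta(u,v)$ for the upper bound, and the antisymmetrization $\tilde w_{uv}=(w_{uv}-w_{vu})/2$ for the lower bound, checking that it preserves $A\ket{\tilde w}=\tau$ and membership in $H(x)$ (which uses that $H(x)$ contains both orientations of each present edge) and weakly decreases the norm via $(a-b)^2\le 2(a^2+b^2)$, with the factor $\frac{1}{2}$ accounted for by summing over directed rather than undirected edges. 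All of these steps check out, including the normalization at $s$ and $t$ and the energy identity $J(\theta)=2\norm{\ket{\tilde w}}^2$. What the paper's approach buys is brevity and an explicit link to the known $st$-connectivity span program machinery; what yours buys is a self-contained argument that does not lean on the external lemma, at the cost of redoing the standard witness--flow correspondence.
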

\begin{proof}
Let ${P}_{K}=(\tilde{H},\tilde{V},\tilde{\tau},\tilde{A})$ be the span program for $st$-connectivity on any $|V(G_d)|$-vertex graph, defined in \cite[Section 4]{IJ15}. Then it is easy to see that $\tau=\tilde\tau$, and $\tilde{A}\Pi_{\tilde{H}(G_d(x))}=A\Pi_{H(x)}$, so $w_+(x,P_{G_d})=w_+(G_d(x),{P}_K)$. By \cite[Lemma 4.1]{IJ15}, $w_+(G_d(x),{P}_K)=\frac{1}{2}R_{s,t}(G_d(x))$. 
\end{proof}

We now show that the negative witness size of the span program for $st$-connectivity on $G_d(x)$ is given
by the effective resistance of the dual graph $G_d'(x):$
\begin{lemma}\label{lemma:negative_witness}
Let $x\in\{0,1\}^{E(G_d)}$ be a  0-instance of \textsc{nand}-tree, and let $G_d'(x)$ be the associated subgraph of $G_d'$. Then $w_-(x, P_{G_d})={2}R_{s',t'}(G_d'(x))$. 
\end{lemma}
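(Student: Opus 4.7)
The plan is to exhibit a bijection between valid negative witnesses $\omega$ for $x$ in $P_{G_d}$ and unit $s't'$-flows $\theta$ on $G_d'(x)$, under which $\|\omega A\|^2 = 2 J(\theta)$, and then minimize both sides to conclude $w_-(x, P_{G_d}) = 2 R_{s',t'}(G_d'(x))$.

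First, I would unpack the negative-witness conditions. Identifying $\omega : V \to \mathbb{R}$ with a vertex function on $V(G_d)$, the constraint $\omega\tau = 1$ becomes $\omega(s) - \omega(t) = 1$, while $\omega A \Pi_{H(x)} = 0$ becomes $\omega(u) = \omega(v)$ for every edge $\{u,v\} \in E(G_d)$ with $x_{\{u,v\}}=1$; equivalently, $\omega$ is constant on each connected component of $G_d(x)$. Since $\overrightarrow{E}(G_d)$ counts both orientations of each undirected edge,
\[
\|\omega A\|^2 \;=\; \sum_{(u,v)\in\overrightarrow{E}(G_d)}(\omega(u)-\omega(v))^2 \;=\; 2\!\!\sum_{\{u,v\}\in E(G_d):\,x_{\{u,v\}}=0}\!\!(\omega(u)-\omega(v))^2,
\]
since the 1-valued edges drop out by the component constraint.

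Next, I would apply the planar duality $\bar{G}_d' = \bar{G}_d^\dagger$ from the preceding theorem. Each 0-valued edge $e \in E(G_d)$ has a unique dual $e^\dagger \in E(G_d'(x))$. Given a witness $\omega$, define $\theta : \overrightarrow{E}(G_d'(x)) \to \mathbb{R}$ by $\theta(e^\dagger) := \omega(v)-\omega(u)$, where $e=(u,v)$ is oriented so that $e^\dagger$ is obtained by rotating $e$ by $\pi/2$ radians clockwise (matching the paper's dual-orientation convention). Antisymmetry is automatic. Flow conservation at any dual vertex $v_f$ other than $s', t'$ follows from the telescoping identity $\sum_{e \subset \partial f}(\omega(v)-\omega(u))=0$ around the face $f$: the 1-valued primal edges on $\partial f$ contribute zero, leaving precisely the $\theta$-values at the dual edges of $G_d'(x)$ incident to $v_f$. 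The net outflow at $s'$ equals the $\omega$-jump across the primal edge $\{s,t\}\in \bar{G}_d \setminus G_d$, namely $\omega(s)-\omega(t)=1$, so $\theta$ is a unit $s't'$-flow on $G_d'(x)$. Directly from the definition,
\[
J(\theta) \;=\; \sum_{e^\dagger \in E(G_d'(x))}\theta(e^\dagger)^2 \;=\; \sum_{e \in E(G_d),\, x_e=0}(\omega(u)-\omega(v))^2 \;=\; \tfrac{1}{2}\|\omega A\|^2.
\]

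Conversely, any unit $s't'$-flow $\theta$ on $G_d'(x)$ yields a negative witness by fixing $\omega(t)=0$ and defining $\omega(u)$ as a discrete line integral of $\theta$ along any path of dual-edge crossings from $t$ to $u$: path-independence follows from flow conservation at interior dual vertices, $\omega$ is constant on components of $G_d(x)$ because $\theta$ vanishes outside $G_d'(x)$, and $\omega(s)-\omega(t)=1$ follows from the unit-flow condition at $s'$. Minimizing both sides of $\|\omega A\|^2 = 2J(\theta)$ across this bijection gives the claim. The main obstacle is the careful orientation bookkeeping and the role of the faces $s', t'$: although the primal edge $\{s,t\}$ is absent from $G_d$ (hence contributes nothing to $\|\omega A\|^2$), its dual $\{s',t'\}$ is exactly what converts the constraint $\omega(s)-\omega(t)=1$ into the unit source/sink at $s'$ and $t'$, and I would need to verify that the $90^\circ$ clockwise convention assigns signs consistently so that the net flow is $+1$ and no edge is double-counted.
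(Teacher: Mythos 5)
Your proposal is correct and follows the same overall architecture as the paper's proof: both directions rest on a correspondence between negative witnesses $\omega$ and unit $s't'$-flows on $G_d'(x)$ satisfying $\|\omega A\|^2 = 2J(\theta)$, and your forward direction (witness $\to$ flow via $\theta(e^\dagger)=\omega(u)-\omega(v)$, conservation from telescoping around faces, unit source from the jump across the absent edge $\{s,t\}$) is essentially the paper's argument verbatim. Where you genuinely diverge is the reverse direction: the paper first extends the unit flow to a circulation on $\bar{G}_d'$ by routing flow $1$ through the dual edge $\{s',t'\}$, decomposes that circulation into face cycles of $\bar{G}_d'$, and reads off the witness as $\omega(v_f)=\frac{1}{2}(\alpha_{\overrightarrow{f}}-\alpha_{\overleftarrow{f}})$, whereas you define $\omega$ directly as a discrete line integral of $\theta$ along primal paths from $t$. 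Your construction is more direct and avoids the bookkeeping caused by the non-uniqueness of the cycle decomposition, but your stated justification of path-independence is one observation short: conservation at interior dual vertices only handles closed walks enclosing neither $s'$ nor $t'$, and you must also rule out a closed walk in $G_d$ enclosing exactly one of them, which would pick up a spurious $\pm 1$. This does hold---since $\{s,t\}\notin E(G_d)$, the faces $s'$ and $t'$ of $\bar{G}_d$ lie in a single face of $G_d$, so any cycle of $G_d$ encloses both or neither and the enclosed divergence is $0$---or you can sidestep it exactly as the paper does, by first promoting $\theta$ to a circulation on $\bar{G}_d'$ (flow $1$ on $\{s',t'\}$) and integrating in $\bar{G}_d$; the same closed-up cycle through $\{s,t\}$ is what gives $\omega(s)-\omega(t)=1$. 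With that observation added, together with the orientation conventions you already flag, your argument goes through and yields $w_-(x,P_{G_d})=2R_{s',t'}(G_d'(x))$.
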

\begin{proof}
The proof exploits the duality between an $st$-path and an $st$-cut. 

For $x\in\{0,1\}^{2^{2d}}$, define $\bar{G}_d(x)$ (respectively $\bar{G}_d'(x)$) as a planar embedding of $G_d(x)$ (resp.\ $G_d'(x)$) with an additional edge $\{s,t\}$ (resp.\ $\{s',t'\}$). 
We first show that a negative witness for $x$ in $P_{G_d}$ corresponds
to a unit $s't'$-flow in $G_d'(x)$ whose energy equals twice the negative
witness size. Fix a negative witness $\omega$ for $x$, and define a
function $\theta:\overrightarrow{E}(\bar{G}_d')\rightarrow\mathbb{R}$ by
$\theta((u,v)^\dagger)=\omega(u)-\omega(v)$. Then clearly we have
$\theta(u',v')=-\theta(v',u')$ for all $\{u',v'\}\in E(\bar{G}_d')$.
Since $\omega$ is a negative witness for $x$, $\sum_{\{u,v\}\in
E(G_d(x))}(\omega(u)-\omega(v))^2=0$, so whenever $\{u,v\}\in
E({G}_d(x))$, $\omega(u)=\omega(v)$, and thus
$\theta((u,v)^\dagger)=0$. This happens precisely when
$\{u,v\}^\dagger\not\in E(G_d'(x))$. Thus
$\theta$ is only nonzero on the edges of $\bar{G}_d'(x)$. 

\begin{figure}[h]
\centering
\begin{tikzpicture}[scale=1]
\filldraw[gray] (0,0) circle (.1);

\draw[->] (-1,0)--(-.5,.866);
\draw[->] (-.5,.866)--(.5,.866);
\draw[->] (.5,.866)--(1,0);
\draw[->] (1,0)--(.5,-.866);
\draw[->] (.5,-.866)--(-.5,-.866);
\draw[->] (-.5,-.866)--(-1,0);

\filldraw (-1,0) circle (.1);
\filldraw (-.5,.866) circle (.1);
\filldraw (.5,.866) circle (.1);
\filldraw (1,0) circle (.1);
\filldraw (.5,-.866) circle (.1);
\filldraw (-.5,-.866) circle (.1);

\draw[gray,->] (0,0) -- (-1.299,.75);
\draw[gray,->] (0,0) -- (0,1.5);
\draw[gray,->] (0,0) -- (1.299,.75);
\draw[gray,->] (0,0) -- (1.299,-.75);
\draw[gray,->] (0,0) -- (0,-1.5);
\draw[gray,->] (0,0) -- (-1.299,-.75);

\node at (.3,0) {$v'$};
\node at (1.6,.9) {$u'$};

\node at (-1.35,0) {$w_1$};
\node at (-.675,1.1691) {$w_2$};
\node at (.675,1.1691) {$w_3$};
\node at (1.35,0) {$w_4$};
\node at (.675,-1.1691) {$w_5$};
\node at (-.675,-1.1691) {$w_6$};

\end{tikzpicture}
\caption{The dualtiy between a cycle and a star.}\label{fig:face}
\end{figure}
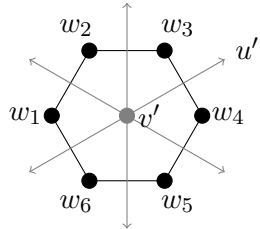

Next, we notice that for any $v'\in
V(\bar{G}_d')$, by the definition of the dual graph, the nodes
$u'$ such that $\{u',v'\}\in \bar{G}_d'$ 
correspond to edges around a face, $f_{v'}$, of
$\bar{G}_d$ (see Figure \ref{fig:face}). If $(w_1,\dots,w_k=w_1)$ 
are the nodes in $\bar{G}_d$ forming the cycle around the face $f_{v'}$, then
\begin{align}
0=\sum_{i=1}^k(\omega(w_i)-\omega(w_{i+1}))=\sum_{i=1}^{k}\theta((w_i,w_{i+1})^
\dagger)=\sum_{u':\{v',u'\}\in E(\bar{G}_d')}\theta(v',u')
=\sum_{u':\{v',u'\}\in E(\bar{G}_d'(x))}\theta(v',u').
\end{align}
Thus, $\theta$ is a circulant on $\bar{G}_d'(x)$, so if we define $\theta'$ so that $\theta'(s',t')=0$, and
$\theta'(v',u')=\theta(v',u')$ when $\{v',u'\}\neq\{s',t'\}$, then $\theta'$
is an $s't'$-flow on $G_d'(x)$. Furthermore, since $\theta(s',t')=\omega(s)-\omega(t)=1$ (since $\omega\tau=1$), $\theta'$ is a \emph{unit} flow. 
The energy of $\theta'$ is:
\begin{equation}
\sum_{\{v',u'\}\in {E}(G_d'(x))}\theta(v',u')^2
=\frac{1}{2}\sum_{(v',u')\in \overrightarrow{E}(G_d')}\theta(v',u')^2=
\frac{1}{2}\sum_{(v',u')\in \overrightarrow{E}(G_d)}(\omega(v')-\omega(u'))^2=\frac{1}{2}\norm{\omega A}^2,
\end{equation}
where the factor of $\frac{1}{2}$ comes from having to sum each edge twice (once for each direction).
Thus $w_-(x,P_{G_d})\geq {2}R_{s',t'}(G_d'(x))$.

Next we show that every unit $s't'$-flow in $G_d'$ corresponds to a
negative witness for $x$ in $P$ whose witness size is equal to half
the energy of the flow. Let $\theta'$ be a unit flow on $G_d'(x)$ from
$s'$ to $t'$. Let $\theta(u,v)=\theta'(u,v)$ when $\{u,v\}\neq
\{s',t'\}$ and $\theta(s',t')=1$. Since $\theta'$ is a unit
$s't'$-flow, $\theta$ is a circulant on $\bar{G}'_d(x)$. It is thus
also a circulant on $\bar{G}'_d$, and so it can be decomposed into a
combination of cyclic-flows around the faces of $\bar{G}'_d$. 

To see this decomposition, let $F$ be the set of faces of $\bar{G}_d'$ with \emph{clockwise}
orientation, and $F'$ is the same faces but with \emph{counter-
clockwise} orientation. Then we can write $\ket{\theta}
=\sum_{e}\theta(e)\ket{e}$ as $\ket{\theta}= \sum_{f\in F\cup
F'}\alpha_f\ket{C_f}$ where $f\in F\cup F'$ and
$\alpha_f\in \mathbb R$, and where
$\ket{C_f}=\sum_{i=1}^{k-1}\ket{w_i,w_{i+1}}$ with $f=(w_1,\dots,w_k=w_1)$ for a
set of vertices $w_i\in\bar{G}_d'$.

For a clockwise oriented face $\overrightarrow{f}\in F$, let
$\overleftarrow{f}$ be the counter-clockwise orientation of the same
face.  Every face $\overrightarrow{f}\in F$ corresponds to a vertex
$v_{{f}}\in V(\bar{G}_d)$, so define 
$\omega(v_f):=\frac{1}{2}(\alpha_{\overrightarrow{f}}-\alpha_{\overleftarrow{f}})$. 
We claim that $\omega$ is a
negative witness. Let $(u',v')$ be any directed edge in
$\overrightarrow{E}(G_d')$. Then since $\{u',v'\}$ is adjacent to two
faces, $(u',v')$ is part of one face in $F$, and one face in $F'$. Let
$\overrightarrow{f}$ be the clockwise oriented face 
containing $(u',v')$, and $\overleftarrow{g}$ be the counter-clockwise oriented
face containing
$(u',v')$. Since these are the only faces containing $(u',v')$, we
must have  $\theta(u',v')=\alpha_{\overrightarrow{f}}+\alpha_{\overleftarrow{g}}$. 
Since $\theta(u',v')=-\theta(v',u')$, we have 
$\alpha_{\overrightarrow{f}}+\alpha_{\overleftarrow{g}}=-\alpha_{\overleftarrow{
f}}-\alpha_{\overrightarrow{g}}$.  Thus:
\begin{equation}
\omega(v_{f})-\omega(v_{g})=\frac{1}{2}(\alpha_{\overrightarrow{f}}-\alpha_{\overleftarrow{f}}-\alpha_{\overrightarrow{g}}
+\alpha_{\overleftarrow{g}})
=\frac{1}{2}(\theta(u',v')-\theta(v',u'))=\theta(u',v').
\end{equation}
So for every edge $(u,v)\in \overrightarrow{E}(\bar{G}_d)$, $\omega(u)-\omega(v)$ is exactly the flow across $(u,v)^\dagger$. Thus for all $\{u,v\}\in E(G_d(x))$, $\{u,v\}^\dagger\not\in E(G_d'(x))$, so $\omega(u)-\omega(v)=0$. Furthermore, $\omega(s)-\omega(t)=1$. Thus $\omega$ is a negative witness, and we have:
\begin{equation}
\norm{\omega A}^2=\sum_{(u,v)\in \overrightarrow{E}(G_d)}(\omega(u)-\omega(v))^2
=2\sum_{e^\dagger \in E(G_d')}\theta(e^\dagger)^2=2J(\theta).
\end{equation}
Because the smallest possible energy of any unit $s't'$-flow is $R_{s',t'}(G'_d(x))$,
we have that $w_-(x)\leq 2R_{s',t'}(G'_d(x))$, completing the proof.
\end{proof}

\noindent Theorem \ref{thm:easy-instances} now follows immediately.
\begin{proof}[{Proof of Theorem \ref{thm:easy-instances}}] Let $f$ be the problem of evaluating \textsc{nand}-trees promised to have ${\cal C}(x)\leq W$.
By Lemmas \ref{lemma:kfault_conn}, \ref{lemma:pos_witness} and \ref{lemma:negative_witness}, we have, for all $1$-instances of $f$, $w_+(x,P_{G_d}) =\frac{1}{2}R_{s,t}(G_d(x))=\frac{1}{2}{\cal C}(x)\leq \frac{1}{2}W$, and for all $0$-instances of $f$ $w_-(x,P_{G_d})=2R_{s',t'}(G'_d(x))=2{\cal C}(x)\leq 2W$. Thus, by Theorem~\ref{thm:span-decision}, the bounded error quantum query complexity of $f$ is at most $O(\sqrt{W_+(f,P_{G_D})W_-(f,P_{G_d})})=O(\sqrt{W^2})=O(W)$.  
\end{proof}

\vskip10pt
In Section \ref{sec:win_nand_tree}, we will consider a different problem: winning the two-player game associated with a \textsc{nand}-tree instance $x$ against an adversary making random choices. To solve this problem, it will be useful not only to evaluate a \textsc{nand}-tree, but to estimate the span program witness size, $w_+(x)$ (or $w_-(x)$). By Theorem \ref{thm:span-est}, we can construct such an algorithm fom $P_{G_d}$. In order to analyze this algorithm's query complexity, we upper bound the approximate negative witness size of inputs to~$P_{G_d}$:
\begin{lemma}\label{lemma:approx_neg_wit}
$\widetilde{W}_-(P_{G_d})\leq 2^{d+1}$ and $\widetilde{W}_+(P_{G_d})\leq 2^{d+1}$. 
\end{lemma}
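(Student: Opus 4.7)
The plan is to bound $\tilde{w}_+(x)$ and $\tilde{w}_-(x)$ uniformly over $x\in\{0,1\}^{2^{2d}}$ by splitting into whether $x$ is a $1$- or $0$-instance. Two of the four resulting cases (the ``diagonal'' ones, where $\tilde{w}_\pm$ coincides with $w_\pm$) are handled by Lemmas \ref{lemma:pos_witness} and \ref{lemma:negative_witness} combined with a path-length estimate; the other two require an explicit construction of approximate witnesses.

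For the diagonal cases, I would first show by induction on $d$ that the longest simple $st$-path in $G_d$ has length $2^d$, and likewise that the longest simple $s't'$-path in $G_d'$ has length $2^d$: the base cases $G_0$ and $G_0'$ are single edges, and each recursive step replaces every edge by two internally-disjoint paths of length $2$, doubling the length of any simple path through it. Since routing a unit $st$-flow along a single $st$-path of length $\ell$ gives energy $\ell$, we get $R_{s,t}(G_d(x))\leq 2^d$ for every $1$-instance $x$, and $R_{s',t'}(G_d'(x))\leq 2^d$ for every $0$-instance $x$. Applying Lemmas \ref{lemma:pos_witness} and \ref{lemma:negative_witness} gives $\tilde{w}_+(x)=w_+(x)\leq 2^{d-1}$ for $1$-instances and $\tilde{w}_-(x)=w_-(x)\leq 2^{d+1}$ for $0$-instances, both at most $2^{d+1}$.

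It then remains to bound $\tilde{w}_+(x)$ for $0$-instances and $\tilde{w}_-(x)$ for $1$-instances. For a $0$-instance $x$, let $\omega^*$ be the optimal negative witness, and let $\eta\in H$ be the vector corresponding to $\omega^*A$ under the inner product on $H$; by definition $\eta\in H(x)^\perp$ and $\langle\eta,w\rangle=\omega^*(\tau)=1$ for every $w$ with $Aw=\tau$. Cauchy--Schwarz then forces $\|\Pi_{H(x)^\perp}w\|^2\geq 1/\|\eta\|^2=1/w_-(x)$, with equality exactly when $\Pi_{H(x)^\perp}w=\eta/\|\eta\|^2$. This pins down $e_+(x)=1/w_-(x)$ and the $H(x)^\perp$-component of any approximate positive witness. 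The remaining $H(x)$-component then has to cancel the residual divergence $\tau-A(\eta/\|\eta\|^2)$, i.e., it must be a flow supported on $E_1=\{e:x_e=1\}$ with prescribed boundary divergences at the vertices touching $E_0$-edges; its minimum squared norm can be bounded using the $2^d$ bound on $st$-path lengths by decomposing into $E_1$-connected components and routing within each, giving $\tilde{w}_+(x)\leq 2^{d+1}$. The symmetric construction on $G_d'$ yields the analogous bound $\tilde{w}_-(x)\leq 2^{d+1}$ for $1$-instances.

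The main obstacle is the final step in the off-diagonal construction: verifying that the residual divergences induced by the canonical min-error component, which live on the vertices incident to $E_0$-edges, can in fact be routed within the $E_1$-subgraph using a flow of energy at most $2^{d+1}-1/w_-(x)$. This relies on the self-dual relationship between $G_d$ and $G_d'$ established earlier in this section together with the $2^d$ path-length bound. An alternative route, bypassing the explicit construction, is to invoke a general inequality of \cite{IJ15} bounding $\tilde{w}_\pm$ in terms of $W_+$ and $W_-$, which would yield $2^{d+1}$ directly once the diagonal cases are in hand.
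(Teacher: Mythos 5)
Your handling of the two ``diagonal'' cases is correct and matches the easy half of the paper's argument: for $0$-instances $\tilde{w}_-(x)=w_-(x)=2R_{s',t'}(G_d'(x))\le 2^{d+1}$, and for $1$-instances $\tilde{w}_+(x)=w_+(x)=\frac{1}{2}R_{s,t}(G_d(x))\le 2^{d-1}$, both via the observation that every self-avoiding $st$- (resp.\ $s't'$-) path has length $2^d$. But the entire content of Lemma~\ref{lemma:approx_neg_wit} lies in the off-diagonal cases --- $\tilde{w}_-(x)$ for $1$-instances and $\tilde{w}_+(x)$ for $0$-instances --- and there your argument stops exactly where the work begins. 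Your duality observation (that $e_+(x)=1/w_-(x)$ and that the $H(x)^{\perp}$-component of a minimum-norm approximate positive witness is pinned to $\eta/\norm{\eta}^2$ for the optimal negative witness vector $\eta$) is fine, but the remaining step --- showing the residual $\tau-A(\eta/\norm{\eta}^2)$ can be routed inside $G_d(x)$ with energy $O(2^d)$ --- is precisely the nontrivial estimate, and you concede you have not verified it. It does not follow routinely from the $2^d$ path-length bound: the residual divergences are dictated by the optimal dual flow and may be spread fractionally over many boundary vertices of many components of $G_d(x)$, and a naive per-component routing bound does not obviously sum to $2^{d+1}$. Your proposed escape hatch --- ``a general inequality of \cite{IJ15} bounding $\tilde{w}_{\pm}$ in terms of $W_+$ and $W_-$'' --- is not available: \cite{IJ15} relates errors to witness sizes (e.g.\ $e_+(x)w_-(x)\ge 1$), but the approximate witness sizes $\tilde{w}_{\pm}$ are not controlled by the exact witness sizes of the other inputs in general, which is exactly why this lemma requires a dedicated proof.

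For comparison, the paper proves the hard case directly and structurally. For a $1$-instance, the optimal approximate negative witness is characterized (Dirichlet's principle) as the harmonic voltage function on $G_d(x)$ with $\omega(s)=1$, $\omega(t)=0$, constant on components not containing $s$ and $t$; then $\norm{\omega A}^2$ --- the energy summed over \emph{all} edges of $G_d$, present or absent --- is bounded by induction on $d$ using the four-block recursive structure of Figure~\ref{fig:subgraphs}: with boundary voltages $\gamma=\omega(l)$ and $\gamma'=\omega(r)$, the rescaled restrictions $\frac{1}{1-\gamma}\omega_{00}$, etc., are themselves optimal approximate negative witnesses for the subinstances, giving $\norm{\omega A}^2\le\((1-\gamma)^2+\gamma^2+(1-\gamma')^2+(\gamma')^2\)2^{d+1}\le 2^{d+2}$ since $\gamma,\gamma'\in[0,1]$ (with a separate constant-voltage argument when $l$ or $r$ lies in a different component). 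Some induction of this kind, exploiting the recursive structure of $G_d$, is what your sketch is missing; to salvage your flow-based route you would need to prove an analogous recursive bound on the residual-routing energy rather than appeal to path lengths alone.
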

\begin{proof}For negative inputs, $\tilde{w}_-(x)=w_-(x)=2R_{s',t'}(G'_d(x))\leq 2^{d+1}$, since, it is easy to see by induction, every self-avoiding $s't'$-path in $G_d'$ has length $2^d$.
Thus, we limit ourselves to positive inputs. 
 
Recall that an approximate
negative witness is a function
$\omega:V(G_d)\rightarrow \mathbb{R}$ that minimizes $\|\omega
A\Pi_{H(x)}\|^2=\sum_{(u,v)\in\overrightarrow{E}(G_d(x))}(\omega(u)-\omega(v))^2$ and satisfies $\omega\tau = \omega(s)-\omega(t)=1.$ 
Then a valid
$\omega$ is a voltage induced by a unit
potential difference between $s$ and $t$ in the resistor network
$G_d(x)$ (Dirichlet's principle, or see \cite{DS84}). Without loss of generality, we 
can assign $\omega$ such that $\omega(s)=1$, $\omega(t)=0$, and then on the component of $G_d(x)$ containing $s$ and $t$, $\omega$ is the unique harmonic function with these boundary conditions, and on any other component of $G_d(x)$, $\omega$ is constant.  Then the negative
approximate witness size is
\begin{align}
\tilde{w}_-(x)=\min_{\textrm{valid }\omega}\left\{\|\omega A\|^2=
\sum_{(u,v)\in \overrightarrow{E}(G_d)}(\omega(u)-\omega(v))^2\right\}.
\end{align}

We prove the result inductively. We start with $d=0$. In this case,
the graph consists only of nodes $s$ and $t$, which must take values
$1$ and $0$, so 
$\|\omega A\|^2=(\omega(s)-\omega(t))^2+(\omega(t)-\omega(s))^2=2$.

Now we look at the case of $G_{d+1}(x)$, which
is formed by replacing the edges of a four cycle with graphs
$G_d(x^\tau)$ as in Figure \ref{fig:subgraphs}. For any approximate negative witness $\tilde{\omega}$ for $x$,
let $\gamma(\tilde\omega)=\tilde\omega(l)$ and $\gamma'(\tilde\omega)=\tilde\omega(r)$.
Let $\omega$ be an optimal approximate negative witness for $x$, and we denote $\gamma\equiv \gamma(\omega)$
$\gamma'\equiv \gamma(\omega)$. For $b,b'\in\{0,1\}$, let $\omega_{bb'}$ be the restriction of $\omega$ to $V(G_d(x^{bb'}))$
that takes value $0$ on vertices not in $V(G_d(x^{bb'}))$. 
Then we have:
\begin{eqnarray}
\norm{\omega A}^2 
&=& \sum_{(u,v)\in\overrightarrow{E}(G_d)}(\omega_{00}(u)-\omega_{00}(v))^2+\sum_{(u,v)\in\overrightarrow{E}(G_d)}(\omega_{01}(u)-\omega_{01}(v))^2\notag\\
&&\qquad+\sum_{(u,v)\in \overrightarrow{E}(G_d)}(\omega_{10}(u)-\omega_{10}(v))^2+\sum_{(u,v)\in \overrightarrow{E}(G_d)}(\omega_{11}(u)-\omega_{11}(v))^2.\label{eq:omegaA}
\end{eqnarray}
Consider $\omega_{00}$. Since $\omega$ is a harmonic function on $G_{d+1}(x)$ with boundary conditions $\omega(s)=1$ and $\omega(t)=0$ that minimizes $\norm{\omega A}^2$, $\omega_{00}$ is a harmonic function on $G_d(x^{00})$ with boundary conditions $\omega_{00}(s)=1$ and $\omega_{00}(l)=\gamma$ that minimizes $\sum_{(u,v)\in \overrightarrow{E}(G_d)}(\omega_{00}(u)-\omega_{00}(v))^2$. If $\gamma=1$, $\omega_{00}$ is necessarily constant, and so $\sum_{(u,v)\in \overrightarrow{E}(G_d)}(\omega_{00}(u)-\omega_{00}(v))^2=0$. Otherwise, $\frac{1}{1-\gamma}\omega_{00}$ is a harmonic function on $G_d(x^{00})$ with boundary conditions satisfying $\frac{1}{1-\gamma}\omega_{00}(s)-\frac{1}{1-\gamma}\omega_{00}(l)=1$, minimizing $\sum_{(u,v)\in \overrightarrow{E}(G_d)}(\omega(u)-\omega(v))^2$, so $\frac{1}{1-\gamma}\omega_{00}$ is an optimal approximate negative witness for $x^{00}$, so by the induction hypothesis, we have:
\begin{equation}
\sum_{(u,v)\in \overrightarrow{E}(G_d)}\(\frac{1}{1-\gamma}\omega_{00}(u)-\frac{1}{1-\gamma}\omega_{00}(v)\)^2\leq 2^{d+1}.
\end{equation}
So for any $\gamma$, 
\begin{equation}
\sum_{(u,v)\in \overrightarrow{E}(G_d)}(\omega_{00}(u)-\omega_{00}(v))^2\leq (1-\gamma)^2 2^{d+1}.\label{eq:omega_00}
\end{equation}
By similar arguments, we have 
\begin{align}
\sum_{(u,v)\in \overrightarrow{E}(G_d)}(\omega_{01}(u)-\omega_{01}(v))^2&\leq \gamma^2 2^{d+1},\label{eq:omega_01}\\
\sum_{(u,v)\in \overrightarrow{E}(G_d)}(\omega_{10}(u)-\omega_{10}(v))^2&\leq (1-\gamma')^2 2^{d+1},\label{eq:omega_10}\\
\sum_{(u,v)\in \overrightarrow{E}(G_d)}(\omega_{11}(u)-\omega_{11}(v))^2&\leq (\gamma')^2 2^{d+1}.\label{eq:omega_11}\\
\end{align}

Thus, combining \eqref{eq:omega_00}, \eqref{eq:omega_01}, \eqref{eq:omega_10} and \eqref{eq:omega_11} with \eqref{eq:omegaA}:
\begin{equation}
\norm{\omega A}^2 \leq (1-\gamma)^22^{d+1}+\gamma^22^{d+1}+(1-\gamma')^22^{d+1}+(\gamma')^22^{d+1}.
\end{equation}

We note that if $r$ and $l$ are in the same
component as $s$ and $t$, then by the harmonic property, we must have
$\gamma,\gamma'\in [0,1]$, and so
$(1-\gamma)^2+\gamma^2+(1-\gamma')^2+(\gamma')^2\leq 2$, and so
$\norm{\omega A}^2\leq 2^{d+2}$. 

Otherwise, suppose $r$ is in a different component than $s$ and $t$. In
that case, $\omega$  can take any constant value on that component and
still be an approximate negative witness,  so we choose
$\tilde{\omega}=0$ on the component containing $r$ and
$\tilde{\omega}=\omega$ everywhere else. In that case, $\gamma'(\tilde{\omega})=0$,  so using arguments similar to those above, we have 
$\norm{\tilde\omega A}^2\leq 2^{d+2}$. Finally, since $\omega$ is an optimal approximate negative 
witness, $\norm{\omega
A}^2\leq \norm{\tilde\omega A}^2.$ A similar argument
works for the case when $l$ is in a different component.


The proof that $\tilde{w}_+(x)\leq 2^{d+1}$ for all $x\in\{0,1\}^{2^{2d}}$ is similar, so we omit the details.  
\end{proof}

\begin{proof}[Proof of Theorem \ref{thm:est}]
Let $x\in\{0,1\}^{2^{2d}}$.
By Theorem \ref{thm:span-est}, since ${\cal C}_B(x)={2}w_-(x)$ for all 0-instances, and ${\cal C}_A(x)=\frac{1}{2}w_+(x)$ for all 1-instances, we can 
estimate ${\cal C}_A(x)$ in query complexity 
\begin{equation}
\widetilde{O}\(\frac{1}{\eps^{2/3}}\sqrt{{\cal C}_A(x)}\widetilde{W}_-^{1/2}\)=\widetilde{O}\(\frac{1}{\eps^{2/3}}\sqrt{{\cal C}_A(x)2^d}\)=\widetilde{O}\(\frac{1}{\eps^{2/3}}\sqrt{{\cal C}_A(x)}N^{1/4}\), 
\end{equation}
since $N=2^{2d}$, and similarly, we can estimate ${\cal C}_B(x)$ in query complexity $\widetilde{O}\(\frac{1}{\eps^{2/3}}\sqrt{{\cal C}_B(x)}N^{1/4}\)$.
Extending to odd-depth trees is a straightforward exercise.
\end{proof}


\section{The Query Complexity of Winning a NAND-Tree}\label{sec:win_nand_tree}

Given an $A$-winnable input, suppose Player $A$ can access $x$ via queries
of the form: $\mathcal{O}_x:\ket{i,b}\mapsto \ket{i,b\oplus x_i}$. We
consider the number of queries needed by Player $A$ to make
decisions throughout the course of the game such that the final live
node has value 1 --- \emph{i.e}.\ Player $A$ wins the game --- with probability $\geq 2/3$. (In this section, we focus on $A$-winnable trees,
but the case of $B$-winnable trees is similar.)

\paragraph{Naive Strategy}
There is a quantum query algorithm \cite{Rei09,R01} 
that decides if a tree of depth ${2d}$ is winnable with bounded error $\epsilon$ in
$O(2^d\log \frac{1}{\epsilon})$ queries. Thus, if Player $A$
must make a decision at a node $v$ with children $v_0$ and $v_1$,
a naive strategy is to
evaluate the subtrees with roots $v_0$ and $v_1$ 
and move to one that evaluates to 1 (if they both evaluate to 0, then since Player $A$'s turns correspond to nodes labelled by $\vee$, $v$ also evaluates to 0, and the tree is not $A$-winnable).
By setting the error to $O(1/d)$ at each decision, 
this strategy succeeds with bounded error and costs:
\begin{equation}
 \sum_{\ell=0}^{d-1}2^{d-\ell}\log d = \sum_{r=1}^{d}2^r\log d \leq 2^{d+1}\log d=O(2^d\log d)=O(\sqrt{N}\log\log N).
\end{equation}

The naive strategy does not account for the fact that some subtrees may be
easier to win than others. For example, suppose one of the subtrees
rooted at $v_0$ or $v_1$ has all leaves labeled by 1, whereas the
other subtree has all leaves labeled by 0. In that case, the player
just needs to distinguish these two very disparate cases. More generally, one of the subtrees might have a
very small positive witness --- i.e., it is very winnable --- whereas
the other has a very large witness --- i.e., is not very winnable. In that case, it may save work later on if Player $A$ chooses the more winnable subtree.
Using the witness size estimation algorithm of \cite{IJ15},
we can create a strategy to quickly distinguish an easily winnable tree from a much
less easily winnable (possibly unwinnable) tree.

\paragraph{Strategy with Witness Size Estimation}

We now discuss a winning algorithm that uses witness size estimation. 
When the tree has small average choice complexity and the opponent
plays randomly, this strategy does better than the naive strategy, on average. The
new strategy will be to always choose the subtree with smaller average
choice complexity, unless the two subtrees are very close in average
choice complexity, in which case it doesn't matter which one we
choose. 

We estimate the average choice complexity of both
subtrees using Theorem \ref{thm:est}. Let $\mathtt{Est}(x)$ be the
algorithm from Thoerem \ref{thm:est} with $\eps=\frac{1}{3}$. We are
only concerned about which of two average choice complexities is
smaller, so we do not want to wait for both iterations of \texttt{Est}
to terminate.  Let $c$ be some polylogarithmic function in $N$ such
that \texttt{Est}$(x)$ always terminates after at most $c\sqrt{{\cal
C}_A(x)}N^{1/4}$, for all $x\in\{0,1\}^N$.  We define a subroutine,
$\texttt{Select}(x^0,x^1)$, that takes two instances, $x^0$ and $x^1$,
and outputs a bit $b$ such that ${\cal C}_A(x^b)\leq 2{\cal
C}_A(x^{\bar{b}})$, where $\bar{b}=b\oplus 1$. $\texttt{Select}$ works
as follows. It runs \texttt{Est}$(x^0)$ and $\texttt{Est}(x^1)$ in parallel. If one of these programs, say $\texttt{Est}(x^b)$,
outputs some estimate $w_b$, then it terminates the other program
after $c\sqrt{w_b}N^{1/4}$ steps. If only the algorithm running on $x^b$ has terminated after
this time, it outputs $b$. If both programs have terminated, it outputs a bit
$b$ such that $w_b\leq w_{\bar{b}}$.

\begin{lemma}\label{lemma:select}
Let $x^0,x^1\in\{0,1\}^N$ be \textsc{nand}-tree instances such that at least one of them is a 1-instance. Let $w_{\min}=\min\{{\cal C}_A(x^0),{\cal C}_A(x^1)\}$. Then
$\mathtt{Select}(x^0,x^1)$  terminates after
\begin{align}
\widetilde{O}\(N^{1/4}\sqrt{w_{\min}}\)
\end{align}
 queries to
$x^0$ and $x^1$ and outputs $b$ such that ${\cal C}_A(x^b)\leq
{2}{\cal C}_A(x^{\bar{b}})$ with bounded error.
\end{lemma}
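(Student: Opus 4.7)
The plan is to analyze the query complexity and correctness of $\mathtt{Select}$ separately, using Theorem \ref{thm:est} as the main tool. Taking $\eps=1/3$ in Theorem \ref{thm:est} and applying standard amplification, I may assume that with high probability every call to $\mathtt{Est}(x)$ terminates within $c\sqrt{{\cal C}_A(x)}N^{1/4}$ queries and returns an estimate $w\in[\tfrac{2}{3}{\cal C}_A(x),\tfrac{4}{3}{\cal C}_A(x)]$, while if $x$ is a 0-instance $\mathtt{Est}(x)$ simply never terminates.

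For the query complexity, since at least one of $x^0,x^1$ is a 1-instance with ${\cal C}_A=w_{\min}$, the call on that input terminates within $c\sqrt{w_{\min}}N^{1/4}$ queries. Hence whichever call terminates first, say $\mathtt{Est}(x^{b'})$, does so at a time $T_1\leq c\sqrt{w_{\min}}N^{1/4}$, after which at most $2T_1$ total queries have been used. The algorithm then runs the other call for at most $c\sqrt{w_{b'}}N^{1/4}$ further queries before cutting it off. Assuming the natural monotonicity that the runtime of $\mathtt{Est}$ scales as $\sqrt{{\cal C}_A(x)}$ up to polylog factors, $b'$ is the arg-min index with high probability, so ${\cal C}_A(x^{b'})=w_{\min}$ and $w_{b'}\leq \tfrac{4}{3}w_{\min}$; this yields a total query count of $\widetilde{O}(\sqrt{w_{\min}}N^{1/4})$.

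For correctness, I condition on the event (of high probability by a union bound) that every estimate returned is accurate to within the stated relative error. There are two cases. If both calls terminate and the algorithm outputs the $b$ with $w_b\leq w_{\bar{b}}$, then ${\cal C}_A(x^b)\leq \tfrac{3}{2}w_b\leq \tfrac{3}{2}w_{\bar{b}}\leq \tfrac{3}{2}\cdot\tfrac{4}{3}{\cal C}_A(x^{\bar{b}})=2{\cal C}_A(x^{\bar{b}})$, as required. Otherwise only $\mathtt{Est}(x^b)$ terminates and the algorithm outputs $b$. In that case $\mathtt{Est}(x^{\bar b})$ has been given at least $c\sqrt{w_b}N^{1/4}$ queries without terminating, so its worst-case runtime bound forces $c\sqrt{{\cal C}_A(x^{\bar b})}N^{1/4}>c\sqrt{w_b}N^{1/4}$, i.e., ${\cal C}_A(x^{\bar b})>w_b\geq \tfrac{2}{3}{\cal C}_A(x^b)$, and hence ${\cal C}_A(x^b)<\tfrac{3}{2}{\cal C}_A(x^{\bar b})\leq 2{\cal C}_A(x^{\bar b})$. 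This case automatically covers the situation where $x^{\bar b}$ is a 0-instance.

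The main subtlety I expect is in the query-complexity argument, specifically the step asserting $w_{b'}=O(w_{\min})$. This requires that the first call to terminate is, with high probability, the one on the smaller-${\cal C}_A$ input, which in turn relies on the runtime of $\mathtt{Est}$ tracking $\sqrt{{\cal C}_A(x)}$ monotonically up to polylog factors. This is natural given the doubling-guess-plus-Grover structure underlying the estimation algorithm of Theorem \ref{thm:span-est}, but it is the one place where the proof must look inside Theorem \ref{thm:est} rather than treat it as a black box.
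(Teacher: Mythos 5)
Your correctness analysis is essentially the paper's own: the case where both estimates are returned gives the ratio $(1+\eps)/(1-\eps)=2$ exactly as in the paper, and the case where only $\mathtt{Est}(x^b)$ returns uses the worst-case termination guarantee to force ${\cal C}_A(x^{\bar b})>w_b\geq \frac{2}{3}{\cal C}_A(x^b)$, which is the paper's contradiction argument rephrased. The genuine gap is in your query-complexity argument. To bound the cutoff $c\sqrt{w_{b'}}N^{1/4}$ you assert that, with high probability, the first call to terminate is the one on the smaller-${\cal C}_A$ input, justified by a ``monotonicity'' of the runtime of $\mathtt{Est}$ in ${\cal C}_A(x)$. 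Theorem \ref{thm:est} (and Theorem \ref{thm:span-est} behind it) only supplies an \emph{upper bound} on the number of queries; nothing guarantees that the instance with larger ${\cal C}_A$ cannot finish first (the bounds are worst-case and carry polylogarithmic slack), so the step ``$b'$ is the arg-min, hence $w_{b'}\leq\frac{4}{3}w_{\min}$'' is unjustified as written --- and, as you note yourself, would require opening up the estimation algorithm rather than using it as a black box.

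The assumption is also unnecessary, which is how the paper proceeds. In the case where the second program is cut off without returning, your own correctness argument already yields ${\cal C}_A(x^{b'})\leq 2{\cal C}_A(x^{\bar b'})$, hence ${\cal C}_A(x^{b'})\leq 2w_{\min}$ and $w_{b'}\leq\frac{4}{3}{\cal C}_A(x^{b'})=O(w_{\min})$, so the cutoff itself is $\widetilde{O}(\sqrt{w_{\min}}N^{1/4})$ regardless of which input finished first. In the case where both programs return estimates, the running time is dominated by the second to terminate, which is bounded simultaneously by its intrinsic guarantee $c\sqrt{{\cal C}_A(x^{\bar b'})}N^{1/4}$ and by the cutoff $c\sqrt{w_{b'}}N^{1/4}=O(\sqrt{{\cal C}_A(x^{b'})}N^{1/4})$; the minimum of these two quantities is $\widetilde{O}(\sqrt{w_{\min}}N^{1/4})$ whichever index achieves $w_{\min}$. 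Substituting this reasoning for your monotonicity step (everything else unchanged) recovers the paper's proof, entirely black-box in Theorem \ref{thm:est}.
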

\begin{proof}
Since at least one of $x^0$ and $x^1$ is a 1-instance, at least one of the programs will terminate. Suppose without loss of generality that $\texttt{Est}(x^0)$ is the first to terminate, outputting $w_0$. Then there are two possibilities: $\texttt{Est}(x^1)$ does not terminate after $c\sqrt{w_0}N^{1/4}$ steps, and \texttt{Select} outputs 0; or $\texttt{Est}(x^1)$ outputs $w_1$ before $c\sqrt{w}N^{1/4}$ steps have passed and \texttt{Select} outputs $b$ such that $w_b\leq w_{\bar{b}}$. 

Consider the first case. Suppose ${\cal C}_A(x^0)>2{\cal C}_A(x^1)$. Then $\texttt{Est}(x^1)$ must terminate after $c\sqrt{{\cal C}_A(x^1)}N^{1/4}\leq \frac{1}{\sqrt{2}}c\sqrt{{\cal C}_A(x^0)}N^{1/4}$ steps. By assumption, we have $|w_0-{\cal C}_A(x^0)|\leq \eps {\cal C}_A(x)$, so $w_0\geq (1-\eps){\cal C}_A(x^0)=\frac{2}{3}{\cal C}_A(x^0)$. Thus $\texttt{Est}(x^1)$ must terminate after $\frac{1}{\sqrt{2}}c\sqrt{\frac{3}{2}w_0}N^{1/4}<c\sqrt{w_0}N^{1/4}$ steps, which is a contradiction. Thus, ${\cal C}_A(x^0)\leq 2{\cal C}_A(x^1)$, so outputting 0 is correct. Furthermore, since we terminate after $c\sqrt{w_0}N^{1/4}=\widetilde{O}(\sqrt{{\cal C}_A(x^0)}N^{1/4})$ steps, and ${\cal C}_A(x^0)=O({\cal C}_A(x^1))$, the running time is at most $\widetilde{O}\(N^{1/4}\sqrt{w_{\min}}\)$.

We now consider the second case, in which both programs output estimates $w_0$ and $w_1$, such that $|w_b-{\cal C}_A(x^b)|\leq \eps {\cal C}_A(x^b)$ for $b=0,1$. Suppose $w_b\leq w_{\bar b}$. We then have
\begin{equation}
\frac{{\cal C}_A(x^b)}{{\cal C}_A(x^{\bar b})}\leq \frac{{\cal C}_A(x^b)}{w_b}\frac{w_{\bar b}}{{\cal C}_A(x^{\bar{b}})}\leq \frac{1+\eps}{1-\eps} = \frac{4/3}{2/3}=2. 
\end{equation}
Thus ${\cal C}_A(x^b)\leq 2{\cal C}_A(x^{\bar b})$, as required. Furthermore, the running time of the algorithm is bounded by the running time of $\texttt{Est}(x^1)$, the second to terminate. We know that $\texttt{Est}(x^1)$ has running time at most $\widetilde{O}(\sqrt{{\cal C}_A(x^1)}N^{1/4})$ steps, and by assumption, $\texttt{Est}(x^1)$ terminated after less than $c\sqrt{w_0}N^{1/4}=\widetilde{O}(\sqrt{{\cal C}_A(x^0)}N^{1/4})$ steps, so the total running time is at most $\widetilde{O}\(N^{1/4}\sqrt{w_{\min}}\)$.
\end{proof}

\noindent We can thus prove the main theorem of this section:
\begin{theorem}\label{thm:winning}
Let $x\in\{0,1\}^N$ for $N=2^{2d}$
be an $A$-winnable input. At every node $v$ 
where Player $A$ makes a decision, let Player $A$
use the $\mathtt{Select}$ algorithm in the following way.
Let $v_0$ and $v_1$ be the two children of $v$, with
inputs to the respective subtrees of $v_0$ and $v_1$ given
by $x^0$ and $x^1$ respectively. Then Player $A$ moves
to $v_b$ where $b$ is the outcome that occurs a majority of
times when $\mathtt{Select}(x^0,x^1)$
is run $O(\log d)$ times. 
Then if Player $B$, at his decision nodes, chooses left
and right with equal probability, Player $A$ will
win the game with probability at least $2/3$, and 
will use
$\tilde{O}(N^{1/4}\sqrt{{\cal C}(x)})$ queries on average, where the average
is taken over the randomness of Player $B$'s choices. 
\end{theorem}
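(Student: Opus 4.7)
The plan is to establish Theorem \ref{thm:winning} in two parts: correctness (Player $A$ wins with probability $\geq 2/3$) and expected query cost. For correctness, I note that at every $A$-decision node $v$ encountered along the play path the subtree input $x^v$ remains $A$-winnable by induction on the play; since $v$ is an \textsc{or}-node, at least one child roots an $A$-winnable subtree. If exactly one does, the other has ${\cal C}_A=\infty$, and the near-optimality guarantee of $\mathtt{Select}$ (Lemma \ref{lemma:select}) forces it to pick the finite (hence winnable) side. A single call to $\mathtt{Select}$ errs with some constant probability; taking the majority of $O(\log d)$ runs reduces the per-decision error to $O(1/d^2)$, so a union bound over the at most $d$ $A$-decisions yields total winning probability $\geq 2/3$.

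For the expected query cost, I would label the $A$-decision nodes on the (random) play path $A_0,\dots,A_{d-1}$, with $A_\ell$ at depth $2\ell$ having subtree size $N_\ell=2^{2(d-\ell)}$, and set $X_\ell:={\cal C}_A(x^{A_\ell})$. By Lemma \ref{lemma:select} and the $O(\log d)$ boosting overhead, the cost incurred at $A_\ell$ is $\tO((N_\ell/2)^{1/4}\sqrt{w_{\min}^{(\ell)}})$ where $w_{\min}^{(\ell)}=\min_b{\cal C}_A(x^{(A_\ell)_b})=X_\ell/c(A_\ell)\leq X_\ell$, using the recursion ${\cal C}_A(x^v)=c(v)\min_b{\cal C}_A(x^{v_b})$ at an $A$-node $v$ derived in the proof of Lemma \ref{lemma:kfault_conn}. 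The companion recursion ${\cal C}_A(x^u)=\tfrac12({\cal C}_A(x^{u_0})+{\cal C}_A(x^{u_1}))$ at a $B$-node $u$, combined with $\mathtt{Select}$'s guarantee ${\cal C}_A(x^{(A_\ell)_b})\leq 2\,w_{\min}^{(\ell)}$, gives after conditioning on the prefix up to $A_\ell$ and averaging over $B$'s uniform random next move,
\begin{equation}
\mathbb{E}[X_{\ell+1}\mid X_\ell]={\cal C}_A(x^{(A_\ell)_b})\leq 2X_\ell/c(A_\ell)\leq 2X_\ell,
\end{equation}
so $\mathbb{E}[X_\ell]\leq 2^\ell{\cal C}_A(x)$. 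Jensen's inequality then yields $\mathbb{E}[\sqrt{X_\ell}]\leq 2^{\ell/2}\sqrt{{\cal C}_A(x)}$, and summing the expected per-level costs gives
\begin{equation}
\sum_{\ell=0}^{d-1}\tO\!\left((N_\ell/2)^{1/4}\cdot 2^{\ell/2}\right)\sqrt{{\cal C}_A(x)}=\sum_{\ell=0}^{d-1}\tO(2^{(2d-1)/4})\sqrt{{\cal C}_A(x)}=\tO(N^{1/4}\sqrt{{\cal C}(x)}),
\end{equation}
since $(N_\ell/2)^{1/4}\cdot 2^{\ell/2}=2^{(2d-1)/4}$ is independent of $\ell$, the $d=O(\log N)$ summands are absorbed into $\tO$, and ${\cal C}(x)={\cal C}_A(x)$ for $A$-winnable $x$.

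The main obstacle is controlling the growth of $\mathbb{E}[X_\ell]$ along the random play path, since $\mathtt{Select}$'s factor-of-$2$ suboptimality could in principle amplify exponentially in $\ell$. What makes the argument work is that the worst-case growth $\mathbb{E}[X_\ell]\leq 2^\ell{\cal C}_A(x)$ is exactly compensated by the $2^{-1/2}$-per-level shrinkage of $(N_\ell/2)^{1/4}$, so the per-level contribution $(N_\ell/2)^{1/4}\sqrt{\mathbb{E}[X_\ell]}$ is independent of $\ell$ and the sum collapses, up to polylog factors, into the single bound $\tO(N^{1/4}\sqrt{{\cal C}(x)})$.
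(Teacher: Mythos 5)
Your proof is correct and follows the same overall strategy as the paper's: boost \texttt{Select} by majority at each $A$-decision, analyze the error-free case, use the $B$-node averaging identity \eqref{eq:odd_comple} to track ${\cal C}_A$ along the random play path, and balance its growth against the shrinking subtree size via Jensen's inequality. The one step where you genuinely diverge is the growth estimate. The paper proves the sharper bound \eqref{eq:avg-growth}, namely that ${\cal C}_A$ at the child chosen by \texttt{Select} is at most $\tfrac32{\cal C}_A$ at the parent, which requires the fault/non-fault case analysis and the parallel-resistance identity \eqref{eq:C}; this makes the per-level factor $\sqrt{3}/2<1$, so the paper's induction yields a uniform bound on the expected cost of each level. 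You instead use only the recursion ${\cal C}_A(x^v)=c(v)\min_b{\cal C}_A(x^{v_b})$ (which does follow from \eqref{eq:CR} together with \eqref{eq:odd_comple}) plus $c(v)\geq 1$ and \texttt{Select}'s factor-2 guarantee, getting $\mathbb{E}[X_{\ell+1}\mid\text{history}]\leq 2X_\ell$ and hence $\mathbb{E}[X_\ell]\leq 2^\ell{\cal C}_A(x)$; the factor $2$ exactly cancels the per-level shrinkage of $N_\ell^{1/4}$, so your per-level expected costs are constant rather than contracting, and the sum costs an extra factor $d=O(\log N)$ that $\tO$ absorbs. Your route is simpler (no fault case distinction, no use of \eqref{eq:C}) and fully suffices for the theorem as stated; the paper's tighter $3/2$ bound only matters if one wants to avoid that extra logarithmic factor. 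Two cosmetic points: the conditioning in your expectation step should be on the full history (path so far and $A$'s selected bit), with the bound on $\mathbb{E}[X_\ell]$ then following by the tower property, and, like the paper, you implicitly restrict the cost analysis to the event that all \texttt{Select} calls succeed, which the amplification justifies.
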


\begin{proof}
We first note that Player $A$ must make $d$ choices
over the course of the game. Thus we amplify Player $A$'s probability
of success by repeating $\mathtt{Select}$ at each decision node $O(\log d)$
times and taking the majority. Then the probability that Player
$A$ chooses the wrong direction at any node is $O(1/d)$. By doing
this we ensure that her 
probability of choosing the wrong direction over the course of the algorithm
is $<1/3$. From here on, we analyze the error free case.

Let $v_{\tau}$ be any node in the tree at distance $2k$ from the root, 
for $k\in\{0,\dots,d-1\}$, with children $v_{\tau 0}$ and $v_{\tau 1}$. 
The node $v_{\tau}$ is the root of an instance of $\textsc{nand}$-tree of depth $2(d-k)$. 
Because it is the root of an even-depth subtree, it is a node where
Player $A$ must make a decision. Player $A$ runs
$\mathtt{Select}(x^{\tau 0},x^{\tau 1})$, which
returns $b_1\in\{0,1\}$ such that (by Lemma \ref{lemma:select})
\begin{align}\label{eq:select_cond}
{\cal C}_A(x^{\tau b_1})\leq 2
{\cal C}_A(x^{\tau \bar b_1}). \end{align}
If $x^{\tau b}$ represents the bits labeling the leaves of an odd-depth subtree, the root of the subtree, $v_{\tau b}$ 
is a node where Player $B$ makes a choice. Because we assume Player $B$ chooses uniformly at random,
\begin{eqnarray}\label{eq:odd_comple}
{\cal C}_A(x^{\tau b}) &=& \min_{{\cal S}\in \mathscr{S}_A(x^{\tau b})}\mathbb{E}_{p\in{\cal S}}\left[\prod_{v\in\nu_A(p)}c(v) \right]\notag\\
&=& \frac{1}{2}\(\min_{{\cal S}\in \mathscr{S}_A(x^{\tau b0})}\mathbb{E}_{p\in{\cal S}}\left[\prod_{v\in\nu_A(p)}c(v) \right]+\min_{{\cal S}\in \mathscr{S}_A(x^{\tau b1})}\mathbb{E}_{p\in{\cal S}}\left[\prod_{v\in\nu_A(p)}c(v) \right]\)\notag\\
&=& \frac{1}{2}\({\cal C}_A(x^{\tau b0})+{\cal C}_A(x^{\tau b1})\).
\end{eqnarray}
Thus, \eqref{eq:select_cond} becomes
\begin{equation}
\frac{1}{2}\({\cal C}_A(x^{\tau b_10})+{\cal C}_A(x^{\tau b_11})\)\leq {\cal C}_A(x^{\tau \bar{b}_10})+{\cal C}_A(x^{\tau \bar{b}_11}).\label{eq:select_cond2}
\end{equation}

If $v_{\tau}$ is a fault, then $\max\{{\cal C}_A(x^{\tau b_1}),{\cal C}_A(x^{\tau \bar b_1})\}=\infty$, and 
so we must have ${\cal C}_A(x^{\tau b_1})<{\cal C}_A(x^{\tau \bar b_1})=\infty$. 
Then by \eqref{eq:CR} and \eqref{eq:odd_comple}, ${\cal C}_A(x^{\tau})={\cal C}_A(x^{\tau b_10})+{\cal C}(x^{\tau b_11})=2{\cal C}_A(x^{\tau b})$.

If $v_\tau$ is not a fault, by \eqref{eq:C}, we have 
\begin{eqnarray}
{\cal C}_A(x^{\tau}) &=& \frac{({\cal C}_A(x^{\tau b_10})+{\cal C}_A(x^{\tau b_11}))({\cal C}_A(x^{\tau \bar{b}_10})+{\cal C}_A(x^{\tau \bar{b}_11}))}{{\cal C}_A(x^{\tau {b}_10})+{\cal C}_A(x^{\tau b_11})+{\cal C}_A(x^{\tau \bar{b}_10})+{\cal C}_A(x^{\tau \bar{b}_11})}\notag\\
&\geq & \frac{\frac{1}{2}\({\cal C}_A(x^{\tau b_10})+{\cal C}_A(x^{\tau b_11})\)^2}{\frac{3}{2}\({\cal C}_A(x^{\tau b_10})+{\cal C}_A(x^{\tau b_11})\)}\quad\mbox{by \eqref{eq:select_cond2}}.
\end{eqnarray}

Thus, whether $v_{\tau}$ is a fault or not, we have:
\begin{equation}
{\cal C}_A(x^{\tau b})=\frac{1}{2}\({\cal C}_A(x^{\tau b_10})+{\cal C}_A(x^{\tau b_11})\) \leq \frac{3}{2}{\cal C}_A(x^{\tau}). \label{eq:avg-growth}
\end{equation}

Since $v_\tau$ is the root of a \textsc{nand}-tree of size $N=2^{2(d-k)}$, by Lemma \ref{lemma:select}, there exists $c\in \mathrm{poly}(\log N)$ such that running \texttt{Select} at node $v_\tau$ has query complexity at most
\begin{align}
c\left(2^{2(d-k)}\right)^{1/4} \sqrt{{\cal C}_A(x^{\tau b})}
\leq c2^{(d-k)/2}\sqrt{\frac{3}{2}{\cal C}_A(x^{\tau})}.
\end{align}

We now show, by induction on $k$, that the expected value of this expression, in $\tau$, is bounded from above by $c2^{d/2}\sqrt{\frac{3}{2}{\cal C}_A(x)}$. For the $k=0$ step this is clear. We next consider the inductive step.

 Let $\tau b_1b_2$, for some string $\tau\in\{0,1\}^{2k-2}$ be the sequence of choices made thus far, so $b_2$ is the last choice made by Player $B$, and $b_1$ the last choice made by Player $A$. Then the query complexity of $\mathtt{Select}$ for $v^{\tau b_1b_2}$, that is, Player $A$'s next choice, will be at most $c2^{(d-k)/2}\sqrt{\frac{3}{2}{\cal C}_A(x^{\tau b_1b_2})}$. This has expected value, in Player $B$'s uniform random choice $b_2$:
\begin{eqnarray}
&& \frac{1}{2}\(c2^{(d-k)/2}\sqrt{\frac{3}{2}{\cal C}_A(x^{\tau b_10})}+c2^{(d-k)/2}\sqrt{\frac{3}{2}{\cal C}_A(x^{\tau b_11})}\)\notag\\
&\leq & \frac{c}{2}\sqrt{3}2^{(d-k)/2}\sqrt{\frac{1}{2}\left({\cal C}_A(x^{\tau b_10})+{\cal C}_A(x^{\tau b_11})\right)}\quad \mbox{by Jensen's inequlaity,}\notag\\
&= & \frac{c}{2}\sqrt{3}2^{(d-k)/2}\sqrt{{\cal C}_A(x^{\tau b_1})}\quad\mbox{by \eqref{eq:odd_comple}}\nonumber\\
&\leq & \frac{c\sqrt{3}}{2\sqrt{2}}2^{(d-(k-1))/2}\sqrt{\frac{3}{2}{\cal C}_A(x^{\tau })} \quad\mbox{by \eqref{eq:avg-growth}}. 
\end{eqnarray}
By the induction hypothesis, this has expected value in $\tau$ at  most $c\frac{\sqrt{3}}{2} 2^{d/2}\sqrt{\frac{3}{2}{\cal C}_A(x)}$.
Using $\sqrt{3}/2 < 1$ completes the proof that every call to \texttt{Select} has expected query complexity at most $c2^{d/2}\sqrt{\frac{3}{2}{\cal C}_A(x)}$.

Summing over the expected query complexity at all $d$ levels, and including success probability
amplification, 
we have that the query complexity is
\begin{equation}
\widetilde{O}\left(2^{d/2}\sqrt{{\cal C}_A(x)}\right)=\widetilde{O}\left(N^{1/4}\sqrt{{\cal C}_A(x)})\right).\qedhere
\end{equation}
\end{proof}

\section{Open Problems}\label{sec:open_problems}

In this work, we have introduced average choice complexity, a measure of the difficulty of winning the two-player game associated with a \textsc{nand}-tree, which allowed us to generalize a superpolynomial quantum speedup of \cite{ZKH12}, and give a new quantum algorithm for winning the two-player \textsc{nand}-tree game. To accomplish this, we exploited a connection between \textsc{nand}-trees and $st$-connectivity problems on a family of graphs, and their dual graphs, which we believe may be of further interest. 

One interesting problem raised by this work concerns span programs in general:
 when can we view span programs as solving $st$-connectivity
problems? This could be particularly interesting for understanding
when span programs are time-efficient, since the time-complexity
analysis of $st$-connectivity span programs is straightforward (see
\cite[Section 5.3]{BR12} or \cite[Appendix B]{IJ15}).

Another motivation for considering the connection between span programs
and $st$-connectivity problems is the nice characterization of the
duality between 1-instances and 0-instances given by the duality of
the corresponding graphs. An important class of $st$-connectivity-related span programs are those arising from the learning graph framework, which provides a means of designing quantum algorithms that is much simpler and more intuitive than designing a general span program \cite{Bel11}. A limitation of this framework is its one-sidedness with respect to 1-certificates: whereas a learning graph algorithm is designed to detect 1-certificates, a framework capable of giving optimal quantum query algorithms for any decision problem would likely treat 0- and 1-inputs symmetrically.  The duality between 1- and 0-inputs in $st$-connectivity problems could give insights into how to extend the learning graph framework to a more powerful framework, without losing the intuition and relative simplicity.

\section{Acknowledgments}

The authors would like to thank Ashley Montanaro for helpful
discussions.

S.J.\ gratefully acknowledges funding provided by the Institute for Quantum Information and Matter, 
an NSF Physics Frontiers Center (NFS Grant PHY-1125565) with support of the Gordon and Betty Moore Foundation (GBMF-12500028).
S.K. acknowledges funds provided by the Department of Defense.

\bibliographystyle{alpha}
\bibliography{nandbib}
\appendix

\end{document}